\documentclass{article}
\usepackage{amsmath,amssymb,amsfonts,amsthm,graphicx}
\usepackage{tkz-graph}
\usepackage{tikz}
\usepackage{color}
\usepackage{cite}
\usepackage{epsfig}
\usepackage{epstopdf}
\usepackage{footnote}
\usepackage{cite}
\usepackage{caption}
\usepackage{subcaption}
\usepackage{enumerate}
\usepackage{algorithm}
\usepackage{algpseudocode}
\usepackage{times}
\usepackage[top=2cm, bottom=2cm, left=2cm, right=2cm]{geometry}
\usepackage{tkz-graph}
\tikzstyle{vertex}=[circle, draw, inner sep=0pt, minimum size=6pt]

\newtheorem{theorem}{Theorem}[section]
\newtheorem{lemma}[theorem]{Lemma}

\theoremstyle{definition}
\newtheorem{definition}[theorem]{Definition}

\newtheorem{corollary}[theorem]{Corollary}

\theoremstyle{remark}
\newtheorem{remark}[theorem]{\bf Remark}

\DeclareMathOperator{\PN}{PN}
\DeclareMathOperator{\dom}{\mathcal{D}om}

\setlength{\parskip}{3pt plus 2pt}
\setlength{\parindent}{20pt}
\setlength{\oddsidemargin}{0.5cm}
\setlength{\evensidemargin}{0.5cm}
\setlength{\marginparsep}{0.75cm}
\setlength{\marginparwidth}{2.5cm}
\setlength{\marginparpush}{1.0cm}
\setlength{\textwidth}{150mm}
\setlength{\textheight}{221mm}

\title{Domination Cover Number of Graphs}
\author{M. Alambardar Meybodi$^{1}$, M.R.~Hooshmandasl$^{2}$, P. Sharifani$^{3}$, A. Shakiba$^{4}$\\
	\footnotesize{$^{1,2,3}$Department of Computer Science, Yazd University, Yazd, Iran.}\\
	\footnotesize{$^{1,2,3,4}$The Laboratory of Quantum Information Processing, Yazd University, Yazd, Iran.}  \\
	\footnotesize{$^4$Department of Computer Science, Vali-e-Asr University of Rafsanjan, Rafsanjan, Iran.}  \\
	\footnotesize{e-mail:$^1$ m.alambardar@stu.yazd.ac.ir, $^2$ hooshmandasl@yazd.ac.ir, $^3$ pouyeh.sharifani@gmail.com}, $^4$ ali.shakiba@vru.ac.ir}

\date{}
\begin{document}

	\maketitle

\begin{abstract}
A set $D \subseteq V$ for the graph $G=(V, E)$ is called a dominating set if any vertex $v\in V\setminus D$ has at least one neighbor in $D$. Fomin et al.\cite{fomin2008combinatorial} gave an algorithm for enumerating all minimal dominating sets with $n$ vertices in $O(1.7159^n)$ time. It is known that the number of minimal dominating sets for interval graphs and trees on $n$ vertices is at most $3^{n/3} \approx 1.4422^n$.  In this paper, we introduce the domination cover number as a new criterion for evaluating the dominating sets in graphs. The domination cover number of a dominating set $D$, denoted by $\mathcal{C}_D(G)$,  is the summation of the degrees of the vertices in $D$. Maximizing or minimizing this parameter among all minimal dominating sets have interesting applications in many real-world problems, such as the art gallery problem. Moreover, we investigate this concept for different graph classes and propose some algorithms for finding the domination cover number in trees, block graphs.
\\
\noindent\textbf{Keywords:} Dominating Sets; Domination Cover Number; Total Dominating Sets.
\end{abstract}

\section{Introduction}
The concept of dominating set and its variations are well-studied topics in graph theory. There are thousands of papers on domination in graphs and several well-known surveys and books on this topic such as  \cite{haynes1998domination,haynes1998fundamentals,cockayne1980total}.

The dominating set problem is a classic NP-complete graph problem \cite{michael1979computers}, however, there exist polynomial time algorithms for some graph classes such as trees, interval graphs, and graph with bounded treewidth \cite{cockayne1975linear,chang1998efficient,van2009dynamic}.

Recently, there has been extensive research dealing with enumeration algorithms and combinatorial lower and upper bounds of minimal dominating sets, both on general graphs and on special classes of graphs \cite{golovach2017minimal}.
In \cite{fomin2008combinatorial}, Fomin et al. gave an algorithm with time complexity of $O(1.7159^n)$ to enumerate all of the minimal dominating sets in graphs with $n$ vertices and shown that the number of minimal dominating sets in such a graph is at most $1.7159^n$. They established a lower and an upper bound for the maximum number of minimal dominating sets in graphs. This gap has been narrowed on some well-known graph classes, such as chordal graphs \cite{couturier2013minimal}, trees \cite{krzywkowski2013trees} and co-bipartite graphs \cite{couturier2015number}. Tight bounds have been obtained for some graph classes such as cographs and split graphs \cite{couturier2013minimal,couturier2015number}. Also, it is shown that the interval graphs and trees on $n$ vertices have at most $3^{n/3} \approx 1.4422^n$ minimal dominating sets \cite{golovach2017minimal}.

Let $\mathcal{D}om(G)\subseteq \mathcal{P}(V(G))$ denotes the collection of all dominating sets of $G$ and $\mathcal{D}om_k(G)$ denotes the collection of all dominating sets of $G$ with size $k$ where $k> 0$. It is obvious that $\mathcal{D}om(G) = \cup_{k>0} \mathcal{D}om_k(G)$. The problem of minimum dominating set is to find the smallest $k> 0$ such that $\mathcal{D}om_k(G)\neq \emptyset$ and $\mathcal{D}om_{k'}(G)= \emptyset$ for all ${k'}<k$. Such smallest $k$ is called the minimum domination number of $G$ and is denoted by $\gamma_G$, or simply $\gamma$ when there is no chance of confusions. Any set $D\in \mathcal{D}om_{\gamma}(G)$ is called a  $\gamma-$set of $G$.

In this paper, we propose a new concept, domination cover number, to evaluate  a number of proposed  dominating set like $D$ in a graph $G$ as follows: 

\[\mathcal{C}_D(G)=\sum_{u\in D}|N(u)|=\sum_{u\in D}deg(u),\]
where the index $D$ in  $\mathcal{C}_D(G)$ refer to a special dominating set $D$ of $G$.
We concentrate on investigating maximum ( or minimum) of domination cover numbers. Maximum domination cover number, for $\gamma$-sets, defined as
\[\mathcal{C}^{max}(G)=\max_{D\in \mathcal{D}om_{\gamma}(G)}\mathcal{C}_D(G),\]
and minimum domination cover number, for $\gamma$-sets, defined as
\[\mathcal{C}^{min}(G)=\min_{D\in \mathcal{D}om_{\gamma}(G)}\mathcal{C}_D(G),\]
As an application of maximum domination cover number, consider we want to use minimum number of protective cameras such that maximum overlapping in the protected places occur.

The rest of the paper is organized as follows:
In Section \ref{Notation}, we introduce some essential definitions and notations. In Section \ref{domcover}, we propose the concept of domination cover number and obtain some bounds for it. In Section \ref{coverproduct}, we show some results on the domination cover number in the product of graphs. Next, we investigate some algorithmic approaches in various classes of graphs in Section \ref{coveralgorithm}. Finally, a conclusion is drawn in last section.

\section{Notations and Definitions}\label{Notation}

Let $G = (V, E)$ be a simple graph. The open neighborhood or simply the neighborhood of a vertex $v \in V$ is the set of all vertices adjacent to $v$ and is denoted by $N_G(v)$, i.e. $N_G(v) = \left\{ u \in V| 
\{u,v\} \in E \right\}$. The closed neighborhood of a vertex $v$ is defined as $N_G[v] = N_G(v) \cup \{v \}$. For a subset of vertices $U \subseteq V$, the open and close neighborhoods of $U$ are defined as
$$N_G(U) = \bigcup_{u \in U} N_G(u),$$
and
$$N_G[U] = \bigcup_{u \in U} N_G[u],$$
respectively. Note that when there is no chance of confusion, we may drop the subscript $G$.

Let  $D$ be a subset  of $V$ and $v \in D$.   Then, the private neighborhood of $v$ with respect to $D$, denoted by $\PN[v; D]$, is the set $\PN[v; D] = N[v]\setminus N[D-\{ v \}]$. Each vertex in $\PN[v; D]$ is called a private neighbor of $v$ with respect to $D$.



A set $D_t \subseteq V$ is called a total dominating set of $G$  if every vertex in $V$ is adjacent to at least one vertex in $D_t$. 
The concept of total domination is first introduced by Cockayne, Dawes, and Hedetniemi in 1980 in \cite{cockayne1980total}.
The minimum cardinality among all the total dominating sets of a graph $G$ is denoted by $\gamma_t(G)$ and each total dominating set of size $\gamma_t(G)$ is called a  $\gamma_t$-set.

 A dominating set $D$ is called an efficient dominating set if every vertex $v \in V \setminus D$ is dominated by exactly one vertex in $D$, and every vertex $v \in D$ is not dominated by other vertices in $D$. Such a dominating set $D$ is called an efficient open dominating set, if for every vertex $v \in V$, we have  $|N(v) \cap D| = 1$.
 
 A subset $D \subseteq V$ in a graph $G$ is a total $[1, 2]$-dominating set if, for every
 vertex $v \in V$ , $1 \leq \vert N(v) \cap  D\vert \leq  2$. The minimum cardinality of a total $[1, 2]$-dominating set of $G$ is called the total $[1, 2]$-domination number, denoted by $γ_t[1,2](G)$.

For any connected graph $G$, a vertex $v \in V$ is called a cut-vertex if $G-v$ is not connected, where $G-v$ is exactly the graph $G$ with vertex $v$ and all of its incident edges removed. A maximal connected subgraph of $G$ without any cut-vertices is called a block of $G$.
Moreover, a graph $G$ is called a block graph if its
blocks are complete subgraphs, or equivalently cliques, and the
intersection of any two blocks is either empty or a cut vertex.

The lexicographic product of graphs $G=(V(G),E(G))$ and $H= (V(H),E(H))$ is denoted by $G\circ H$, and is a graph with vertex set $V(G)\times V(H)$, where a vertex $(a,x)$ is adjacent to a vertex $(b,y)$ if either $\{a,b\} \in E(G)$ or $a=b$ and $\{x,y\}\in E(H)$. Let $a\in V(G)$. Then, the induced  subgraph by $\{(a,x)|x\in V(H)\}$ is called the $H$-layer of $G$ with respect to the vertex $a$ and is denoted by $H^a$. It is clear that $H^a$ is isomorphic to $H$.

\section{Domination Cover Number}\label{domcover}
In this section, we give formal definitions for the domination cover number and related problems. 
\begin{definition}[Covering Number of a Set of Vertices]
Let $G=(V,E)$ be a graph and $A\subseteq V$. Then, the covering number of $A$, denoted by $\mathcal{C}_A(G)$, is defined as $\sum\limits_{v\in A}deg_G(v)$.
\end{definition}
\begin{remark}
Note that the covering number of a set of the vertices $A$ can be defined equivalently as $\sum\limits_{v\in A}|N_G(v)|$.
\end{remark}

Since a dominating set $D$ of a graph $G=(V,E)$  is  a subset of vertices, then we can calculate the covering number of $D$, i.e.   
$$\mathcal{C}_D(G)=\sum\limits_{u\in D}deg_G(u).$$
This parameter is called the domination cover number of the dominating set $D$ with respect to $G$.

In this  paper, we tend to solve the following problem which is referred to as the maximum domination covering of $\gamma_G$-sets, denoted by $\mathcal{C}^{max}(G)$. This problem is defined as
\[
	\mathcal{C}^{max}(G)=\max_{D\in \mathcal{D}om_{\gamma}(G)}\mathcal{C}_G(D),
\]
where $\dom_{\gamma}(G)$ is the collection of all $\gamma-$sets.

\begin{remark}
It is worth noting that the $\mathcal{C}^{min}(G)$ can be similarly defined as
\[\mathcal{C}^{min}(G)=\min_{D\in \dom_{\gamma}(G)}\mathcal{C}_G(D).\]
\end{remark}

It is clear that the domination number is unique, however, there may exist several dominating sets of size $\gamma$. The main purpose of introducing the domination cover number is to find a $\gamma$-set with the maximum or minimum covering number. 

\begin{remark}
	Note that the problem of finding maximum (minimum) domination cover number is extensible to other types of dominations as well.
\end{remark}

 \begin{definition}{\textbf{(Total Domination Cover Number)}}
 Let $G$ be a graph and $D_t$ be a total dominating set. The total domination cover number of $G$ with respect to $D_t$ is defined as
 \[\mathcal{C}_{D_t}(G)=\sum_{u\in D_{t}}deg_G(u).\]
 \end{definition}

 Let $H=(W,F)$ be a subgraph of $G=(V,E)$ and  $D\in \mathcal{D}om_{\gamma}(G)$. Then, the domination cover number of $H$ with respect to the set $D$ equals
\[\mathcal{C}_D(H)=\sum_{u\in W}\vert N_G(u)\cap D\vert .\]

 
In the following, we investigate Domination cover number for some graphs.
\begin{theorem}(\cite{chartrand2006introduction}, page 371)
For any path $P_n$ and cycle $C_n$, we have
\[ \gamma(P_n)=\gamma(C_n) =
  \begin{cases}
    k,      & \quad \text{if } n =3k,\\
    k+1,  & \quad \text{otherwise. }\\
  \end{cases}
\]
\end{theorem}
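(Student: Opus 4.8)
The plan is to establish matching lower and upper bounds, both equal to $\lceil n/3\rceil$, and then check that this closed form agrees with the stated piecewise expression: indeed $\lceil n/3\rceil = k$ when $n=3k$, and $\lceil n/3\rceil = k+1$ when $n = 3k+1$ or $n=3k+2$. So the whole theorem reduces to proving $\gamma(P_n)=\gamma(C_n)=\lceil n/3\rceil$.

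First I would prove the lower bound simultaneously for both $P_n$ and $C_n$. In either graph every vertex has degree at most two, so the closed neighborhood of any vertex satisfies $|N[v]| \le 3$. If $D$ is any dominating set, then $V = \bigcup_{v\in D} N[v]$, whence $n \le \sum_{v\in D}|N[v]| \le 3|D|$. This forces $|D| \ge n/3$, and since $|D|$ is an integer we obtain $|D| \ge \lceil n/3\rceil$. Applied to a $\gamma$-set, this yields $\gamma(P_n) \ge \lceil n/3\rceil$ and $\gamma(C_n)\ge \lceil n/3\rceil$.

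Next I would supply a matching construction for the path. Labelling the vertices $v_1,\dots,v_n$ in order along $P_n$, take $D=\{v_2, v_5, v_8, \dots\}$, that is, every third vertex starting from $v_2$. Each chosen $v_i$ dominates the triple $\{v_{i-1}, v_i, v_{i+1}\}$, and these triples tile the path without gaps; starting at $v_2$ guarantees the left endpoint $v_1$ is covered, and a short analysis on the residue of $n$ modulo $3$ confirms the last chosen vertex reaches $v_n$. Counting the selected vertices gives exactly $\lceil n/3\rceil$, so $\gamma(P_n) = \lceil n/3\rceil$. Finally, for the cycle I would note that $C_n$ is obtained from $P_n$ by adding the single edge $\{v_n,v_1\}$; since $P_n$ and $C_n$ share the same vertex set and $E(P_n)\subseteq E(C_n)$, any dominating set of $P_n$ is also a dominating set of $C_n$. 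Hence $\gamma(C_n)\le \gamma(P_n)=\lceil n/3\rceil$, which together with the lower bound gives $\gamma(C_n)=\lceil n/3\rceil$ as well.

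The main obstacle is the boundary bookkeeping: for the path one must verify that no endpoint is left undominated when $n$ is not a multiple of $3$, and correspondingly one must be sure the ``seam'' of the cycle behaves. I expect this to be entirely routine once the construction is phrased via the residue classes $n\equiv 0,1,2 \pmod 3$, since the degree-based lower bound already pins the count and only the existence of a covering pattern of that exact size remains to be checked.
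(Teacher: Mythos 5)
The paper offers no proof of this statement at all --- it is quoted from Chartrand's textbook (the citation on page 371) and used as a black box --- so there is no in-paper argument to compare yours against. Judged on its own, your proof is the standard one and is essentially correct: the lower bound $\gamma \ge \lceil n/3\rceil$ from $|N[v]|\le 3$ and $V=\bigcup_{v\in D}N[v]$ is airtight for both $P_n$ and $C_n$, and the observation that $E(P_n)\subseteq E(C_n)$ transfers the upper bound to the cycle cleanly. The one place you should be careful is the construction for the path: the set of all vertices $v_{3t+2}$ with $3t+2\le n$ does \emph{not} ``tile without gaps'' when $n\equiv 1\pmod 3$. For $n=3k+1$ the last such vertex is $v_{3k-1}$, the set has only $k$ elements, and $v_{3k+1}$ is left undominated, so your claim that ``the last chosen vertex reaches $v_n$'' fails in exactly that residue class; you must explicitly append one more vertex (say $v_n$) to reach the required $k+1=\lceil n/3\rceil$. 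You flag this as routine boundary bookkeeping, and it is --- the lower bound already guarantees the patched set is optimal --- but the case split should actually be written out rather than asserted, since as literally stated the construction is wrong for one of the three residues.
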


\begin{theorem}Let $D\in \mathcal{D}om_{\gamma}(P_n)$. The domination cover number with respect to  $D$ satisfies:
	\begin{itemize}
		\item if $n=3k$, then  $\mathcal{C}_D(P_n)=2k$, 
		\item if $n=3k+1$, then $2k\leq \mathcal{C}_D(P_n)\leq 2k+2$, 
		\item if $n=3k+2$, then $2k+1\leq \mathcal{C}_D(P_n)\leq 2k+2$. 
	\end{itemize}

\end{theorem}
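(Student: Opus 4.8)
The plan is to reduce the whole statement to two elementary facts about $P_n$: an exact formula for $\mathcal{C}_D$ in terms of how many endpoints lie in $D$, and a generic covering inequality that every dominating set satisfies. Label the vertices $v_1,\dots,v_n$ along the path, so that $v_1$ and $v_n$ have degree $1$ and every other vertex has degree $2$. For any $D\subseteq V$ of size $m$, writing $e=|D\cap\{v_1,v_n\}|\in\{0,1,2\}$ for the number of endpoints contained in $D$, I would record the identity
\[
\mathcal{C}_D(P_n)=2(m-e)+e=2m-e .
\]
Since $D\in\mathcal{D}om_{\gamma}(P_n)$ forces $m=\gamma$ and $e\ge 0$, this already gives the common upper bound $\mathcal{C}_D(P_n)\le 2\gamma$, which equals $2k$ when $n=3k$ and $2k+2$ when $n=3k+1$ or $n=3k+2$.

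For the lower bounds I would exploit that $D$ dominates $G$, i.e. $V=N[D]=\bigcup_{v\in D}N[v]$. Taking cardinalities and bounding the union by the sum of the closed neighborhoods yields
\[
n=|N[D]|\le\sum_{v\in D}|N[v]|=\sum_{v\in D}\bigl(\deg_{P_n}(v)+1\bigr)=\mathcal{C}_D(P_n)+\gamma ,
\]
so that $\mathcal{C}_D(P_n)\ge n-\gamma$. Substituting the value of $\gamma$ from the preceding theorem in each residue class produces exactly the claimed lower bounds: $n-\gamma=2k$ when $n=3k$, $n-\gamma=2k$ when $n=3k+1$, and $n-\gamma=2k+1$ when $n=3k+2$.

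Combining the two estimates closes each case. When $n=3k$ the lower bound $2k$ and the upper bound $2\gamma=2k$ coincide, forcing $\mathcal{C}_D(P_n)=2k$ (and incidentally $e=0$, so no $\gamma$-set of $P_{3k}$ can contain an endpoint). When $n=3k+1$ the two estimates give $2k\le\mathcal{C}_D(P_n)\le 2k+2$, and when $n=3k+2$ they give $2k+1\le\mathcal{C}_D(P_n)\le 2k+2$, precisely matching the statement.

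I do not expect a genuine obstacle: the only points needing care are checking that $n-\gamma$ aligns with the stated bound in each of the three residue classes, and that the degree-counting identity correctly accounts for the two degree-$1$ endpoints. If one additionally wants to confirm the bounds are sharp, I would exhibit explicit $\gamma$-sets realizing the extreme values of $e$ (placing dominators at $v_1,v_4,\dots$ to drive $e$ up and at interior vertices $v_2,v_5,\dots$ to drive $e$ down), but this is not required for the inequalities themselves.
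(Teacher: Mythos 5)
Your proof is correct, and it takes a genuinely different route from the paper's. The paper argues by exhibiting explicit $\gamma$-sets in each residue class (for $n=3k$ it invokes uniqueness of the $\gamma$-set $\{v_{3t+2}\}$; for $n=3k+1$ and $n=3k+2$ it displays particular sets claimed to realize the minimum and maximum covering numbers), which is really a sharpness argument and leaves the assertion that \emph{every} $\gamma$-set obeys the bounds somewhat implicit. You instead prove two uniform facts: the exact identity $\mathcal{C}_D(P_n)=2\gamma-e$ with $e=|D\cap\{v_1,v_n\}|$, which immediately gives the upper bound $2\gamma$, and the union-bound inequality $n=|N[D]|\le\sum_{v\in D}(\deg v+1)$, which gives $\mathcal{C}_D\ge n-\gamma$ (this is in fact the paper's own Lemma 3.7, which you re-derive more cleanly than the paper does). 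Substituting $\gamma(P_n)$ in each residue class then closes all three cases at once, with the $n=3k$ case forced to equality and the pleasant byproduct that no $\gamma$-set of $P_{3k}$ contains an endpoint. What your argument buys is rigor and uniformity over all $\gamma$-sets with no case-by-case constructions; what the paper's approach buys is attainment of the bounds, which the theorem statement does not actually demand and which you correctly flag as an optional addendum requiring explicit extremal sets (note, incidentally, that the paper's displayed set for $n=3k+1$ has an off-by-one in the index range and should run over $0\le t\le k$).
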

\begin{proof}
Let $P_n=v_1v_2\cdots v_n$ be a path of length $n$. In the case that $n=3k$, the path $P_n$ has a unique  $\gamma-$set, i.e.  $D=\{v_p\,:\,p=3t+2,  \,\, 0\leq t\leq k-1\}$ is the only dominating set of $P_n$ in this case. It is obvious to see that the degree of each vertex in this set is two, therefore we have $\mathcal{C}_D(P_n)=2k$.
In the case of $n=3k+1$, one of the $\gamma-$sets is  $D=\{v_p : p=3t+1,\,\, 0\leq t\leq k-1\}$. It is easy to see this set is the minimum dominating set with the minimum covering number and it has $k-1$ vertices of degree two and two vertices of degree one. Therefore the lower bound is established. Also, we have another dominating set of size $k+1$ where none of its vertices  has degree one which concludes the upper bound $\mathcal{C}_D(P_n)\leq 2k+2$. 
In the case of $n=3k+2$, the upper and lower bounds are obtained similar to  the previous case.
\end{proof}

There exist many different graph classes such as Barbell graphs and Book graphs that $\mathcal{C}_D(G)=|V(G)|$, e.g. see Figure \ref{Fig:1}.

\begin{figure}[h]\label{Fig:1}
\centering
\begin{subfigure}{.5\textwidth}
  \centering
  \includegraphics[scale=.8]{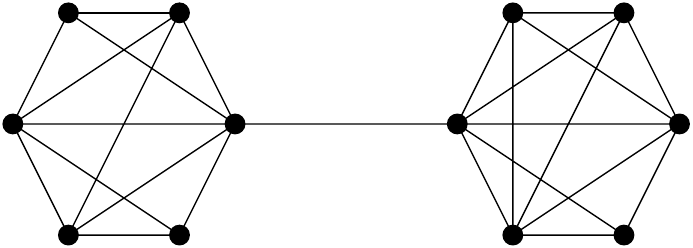}
  \caption{Barbell graph with $n=5$}
\end{subfigure}%
\begin{subfigure}{.5\textwidth}
  \centering
  \includegraphics[scale=.6]{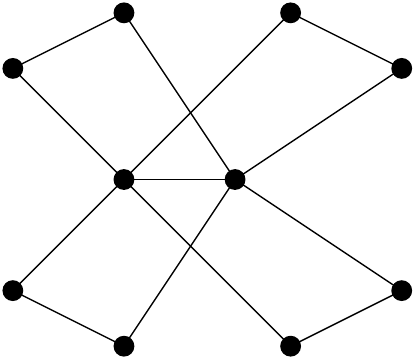}
  \caption{Book graph $B_4$}
\end{subfigure}
\caption{Graphs with a dominating set $D$ such that $\mathcal{C}_D(G)=|V(G)|$}\label{Fig:1}
\end{figure}

Also, there are many graphs with unique domination cover number. For example, let $G$ be a graph with dominating set  $D$ that satisfies the following condition for every vertex $x\in D$
$$\gamma(G-x)\ge \gamma(G).$$
 Then, the set $D$ is the unique dominating set for $G$ \cite{gunther1994graphs}. Therefore, the graph $G$ has a unique domination cover number.

\subsection{Bounds for the Domination Cover Number}
In this section, we provide some bounds for domination cover number of graphs. For convenience, we assume that $G$ is of order $n$.
\begin{lemma}\label{lower_bound}
Let $D\in \mathcal{D}om_{\gamma}(G)$, then we have
\[\mathcal{C}_D(G) \geq n-\gamma(G).\]
\end{lemma}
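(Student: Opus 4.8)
The plan is to read $\mathcal{C}_D(G)=\sum_{u\in D}\deg_G(u)$ as a count of edge–endpoints lying in $D$, and then to exploit the defining property of a dominating set: every vertex outside $D$ is forced to send at least one edge into $D$.

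First I would split the edges incident to $D$ into those with both endpoints in $D$ and those joining $D$ to $V\setminus D$. Writing $e(D)$ for the number of edges inside $D$ and $e(D,V\setminus D)$ for the number of edges of the cut, a handshake-type identity gives
\[
\mathcal{C}_D(G)=\sum_{u\in D}\deg_G(u)=2\,e(D)+e(D,V\setminus D),
\]
since an edge inside $D$ is counted once at each of its two endpoints, while a cut edge is counted only at its single endpoint in $D$. As $e(D)\ge 0$, this already yields $\mathcal{C}_D(G)\ge e(D,V\setminus D)$.

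Second, I would bound the cut from below. Because $D$ is a dominating set, each of the $n-\gamma(G)$ vertices $v\in V\setminus D$ has at least one neighbor in $D$, i.e.\ contributes at least one edge to the cut; since distinct vertices $v$ give distinct cut edges, we obtain $e(D,V\setminus D)\ge |V\setminus D|=n-\gamma(G)$. Combining the two inequalities gives $\mathcal{C}_D(G)\ge n-\gamma(G)$, as desired.

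There is no genuine obstacle here; the only points to state carefully are that discarding the $2\,e(D)$ term is legitimate, as it can only increase the right-hand side, and that the $n-\gamma(G)$ cut edges witnessed by the outside vertices are genuinely distinct, so nothing is double-counted in the lower bound. I would also remark that the argument uses only that $D$ is dominating and not its minimality, so the inequality in fact holds for \emph{every} dominating set, the hypothesis $D\in\mathcal{D}om_{\gamma}(G)$ merely fixing $|D|=\gamma(G)$ in the statement of the bound.
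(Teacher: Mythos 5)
Your proof is correct and rests on the same double counting as the paper's: writing $\mathcal{C}_D(G)=2e(D)+e(D,V\setminus D)$ is just the edge-endpoint form of the paper's identity $\sum_{u\in D}\deg_G(u)=\sum_{v\in V}|N(v)\cap D|$, and both arguments then extract one unit from each of the $n-\gamma(G)$ vertices outside $D$ via the domination property. If anything, your version is the cleaner write-up, since it avoids the paper's unnecessary case split on whether $D$ is efficient, and you correctly note that minimality of $D$ is not needed beyond fixing $|D|=\gamma(G)$.
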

\begin{proof}
It is clear that if $D$ is an efficient dominating set, then for every vertex $v\in V\setminus D$, we have $\vert N(v)\cap D\vert=1$ and  for  $v\in  D$, we have $\vert N(v)\cap D\vert=0$, so this bound is concluded. In the case that $D$ is not an efficient dominating set, we have two cases to consider. In the first case, there exists at least one vertex $u\in V\setminus D$ such that $\vert N(u)\cap D \vert \geq 2$. So,
\[\mathcal{C}_D(G) =\sum_{v\in V\setminus D} \vert N(v)\cap D \vert \geq 2+ \sum_{v\in V\setminus {(D\cup \{u \})}} 1\geq n+1-\vert D \vert.\]
In the second case, there exists at least one vertex $u\in  D$ such that $\vert N(u)\cap D \vert \geq 1$. Therefor as in the first case, we have $\mathcal{C}_D(G)\geq n+1-\vert D\vert$.
\end{proof}

\begin{corollary}
If $G$ has an efficient dominating set, then 
\[\mathcal{C}^{min}(G)=n -\gamma(G).\]
\end{corollary}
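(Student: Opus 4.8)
The plan is to sandwich $\mathcal{C}^{min}(G)$ between $n-\gamma(G)$ from both sides. The lower bound is already in hand: Lemma~\ref{lower_bound} guarantees that \emph{every} $\gamma$-set $D$ satisfies $\mathcal{C}_D(G)\ge n-\gamma(G)$, so taking the minimum over all $\gamma$-sets immediately yields $\mathcal{C}^{min}(G)\ge n-\gamma(G)$. It therefore remains only to exhibit a single $\gamma$-set attaining this value.

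For the achievability I would take the efficient dominating set $D$ promised by the hypothesis and establish two facts about it. First, rewriting the covering number by counting contributions at each vertex gives $\mathcal{C}_D(G)=\sum_{u\in D}\deg(u)=\sum_{v\in V}|N(v)\cap D|$, using that adjacency is symmetric. Because $D$ is efficient, $|N(v)\cap D|=1$ for every $v\in V\setminus D$ and $|N(v)\cap D|=0$ for every $v\in D$, so the sum collapses to $\mathcal{C}_D(G)=|V\setminus D|=n-|D|$.

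The second---and I expect the only substantive---step is to show that an efficient dominating set is in fact a minimum dominating set, i.e.\ $|D|=\gamma(G)$, so that $D\in\mathcal{D}om_{\gamma}(G)$ and the value $n-|D|$ may legitimately compete in the minimum defining $\mathcal{C}^{min}(G)$. The key observation is that efficiency makes the closed neighborhoods $\{N[v]:v\in D\}$ pairwise disjoint while still covering $V$, hence a partition of $V$. Now let $S$ be any dominating set. Each $v\in D$ must be dominated by $S$, so $N[v]\cap S\neq\emptyset$; since the sets $N[v]$ are disjoint, these nonempty intersections are disjoint as well, forcing $|S|\ge|D|$. Thus $D$ is a smallest dominating set and $|D|=\gamma(G)$.

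Combining the pieces, $D$ is a $\gamma$-set with $\mathcal{C}_D(G)=n-\gamma(G)$, so $\mathcal{C}^{min}(G)\le n-\gamma(G)$; together with the lower bound from Lemma~\ref{lower_bound} this gives the claimed equality. The main obstacle is the optimality of efficient dominating sets, and the partition-of-closed-neighborhoods argument is what resolves it; everything else is bookkeeping already prepared by Lemma~\ref{lower_bound}.
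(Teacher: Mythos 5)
Your proof is correct and follows the same route the paper intends: the lower bound comes from Lemma~\ref{lower_bound}, and the efficient dominating set attains it with equality, exactly as observed in the first line of that lemma's proof. The one thing you add that the paper leaves implicit is the verification that an efficient dominating set actually has cardinality $\gamma(G)$ (via the partition of $V$ into the pairwise disjoint closed neighborhoods $N[v]$, $v\in D$, forcing $|S|\ge |D|$ for every dominating set $S$); this step is genuinely needed for $D$ to be eligible in the minimum defining $\mathcal{C}^{min}(G)$, and your argument for it is sound.
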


\begin{theorem}\label{bounds}
Let $D\in \mathcal{D}om_{\gamma}(G)$, then the domination cover number of $G$ is bounded as
\begin{equation}
\left \lceil \frac{n}{2} \right \rceil \leq \mathcal{C}_D(G)\leq \left (\left \lceil \frac{n}{2} \right \rceil\right )^2,
\end{equation}
and these bounds are sharp, in the sense that there exist graphs satisfying the equalities for infinitely many values of $n$.
\end{theorem}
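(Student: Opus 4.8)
The plan is to prove the two inequalities separately. The lower bound will follow almost immediately from Lemma~\ref{lower_bound} combined with a classical domination bound, whereas the upper bound is the real content and will be obtained from a counting argument based on private neighbors. Throughout I assume $G$ has no isolated vertices, which is the natural hypothesis here (for the empty graph the lower bound already fails), and I recall that a minimum dominating set is in particular minimal.

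For the lower bound I would invoke Ore's theorem: every graph without isolated vertices satisfies $\gamma(G)\le n/2$, hence $\gamma(G)\le\lfloor n/2\rfloor$ since $\gamma(G)$ is an integer. Plugging this into Lemma~\ref{lower_bound}, for any $D\in\mathcal{D}om_{\gamma}(G)$ I obtain
\[
\mathcal{C}_D(G)\ge n-\gamma(G)\ge n-\left\lfloor\tfrac{n}{2}\right\rfloor=\left\lceil\tfrac{n}{2}\right\rceil,
\]
which is exactly the claimed bound.

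For the upper bound I would start from the identity $\mathcal{C}_D(G)=\sum_{w\in V}\bigl|N(w)\cap D\bigr|$ already recorded in the paper, and exploit minimality: each $u\in D$ has a nonempty private neighborhood $\PN[u;D]$, so I may choose one representative $p(u)$ per vertex and set $P=\{p(u):u\in D\}$, a set of $\gamma$ distinct vertices. The two governing observations are that if $p(u)=u$ then $u$ has no neighbor in $D$, so $|N(u)\cap D|=0$, and if $p(u)\neq u$ then $p(u)\in V\setminus D$ is adjacent to $u$ but to no other vertex of $D$, so $|N(p(u))\cap D|=1$. Writing $a$ for the number of self-private vertices and splitting the sum over the four classes $D\cap P$, $(V\setminus D)\cap P$, $D\setminus P$, $(V\setminus D)\setminus P$ with the respective per-term bounds $0$, $1$, $\gamma-1$, $\gamma$, the terms containing $a$ cancel and I reach the clean estimate
\[
\mathcal{C}_D(G)\le\gamma(G)\bigl(n-\gamma(G)\bigr).
\]
Since $\gamma(n-\gamma)$ is increasing for $\gamma\le n/2$ and $\gamma(G)\le\lfloor n/2\rfloor$, the right-hand side is maximized at $\gamma=\lfloor n/2\rfloor$, giving
\[
\mathcal{C}_D(G)\le\left\lfloor\tfrac{n}{2}\right\rfloor\left\lceil\tfrac{n}{2}\right\rceil\le\left(\left\lceil\tfrac{n}{2}\right\rceil\right)^2.
\]

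For sharpness it suffices to treat even $n$, which already gives infinitely many values. The corona $K_{n/2}\circ K_1$ (a clique on $n/2$ vertices, each carrying one pendant) has $\gamma=n/2$; taking $D$ to be the clique vertices, each of degree $n/2$, yields $\mathcal{C}_D(G)=(n/2)^2=(\lceil n/2\rceil)^2$, while taking $D$ to be the pendant vertices yields $\mathcal{C}_D(G)=n/2=\lceil n/2\rceil$ (a perfect matching $\tfrac{n}{2}K_2$ also realizes the lower bound). The main obstacle is precisely the upper bound: the naive estimate $2e(D)+e(D,V\setminus D)\le\gamma(n-1)$ is too weak, as it can exceed $(\lceil n/2\rceil)^2$, so minimality must be used in an essential way. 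The delicate point is therefore the private-neighbor bookkeeping that accounts for the exact saving of $\gamma(\gamma-1)$ and upgrades $\gamma(n-1)$ to $\gamma(n-\gamma)$; verifying that the $a$-dependent terms cancel regardless of how many vertices of $D$ are self-private is what makes the bound go through uniformly.
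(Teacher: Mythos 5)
Your proof is correct, and the lower bound and the sharpness examples coincide with the paper's (Ore's bound $\gamma(G)\le\lfloor n/2\rfloor$ fed into Lemma~\ref{lower_bound}, and the corona of $K_{n/2}$ for both extremes). The upper bound, however, is where you genuinely diverge, and your route is the stronger one. The paper argues by asserting an extremal configuration: after reducing to the case where each vertex of $D$ has exactly one private neighbor, it claims the maximum occurs when $D$ induces a clique with one pendant private neighbor each, and reads off the value (its displayed formula $\binom{\gamma}{2}+\gamma$ is in fact the edge count rather than the degree sum $\gamma^2$, and no argument is given that other vertices of $V\setminus D$ cannot push $\mathcal{C}_D(G)$ higher). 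You instead double count via $\mathcal{C}_D(G)=\sum_{w\in V}|N(w)\cap D|$, pick one private-neighbor representative $p(u)$ per $u\in D$ (these are distinct because private neighborhoods with respect to $D$ are pairwise disjoint --- worth stating explicitly), and bound the four classes $D\cap P$, $(V\setminus D)\cap P$, $D\setminus P$, $(V\setminus D)\setminus P$ by $0$, $1$, $\gamma-1$, $\gamma$ respectively; the cancellation of the self-private count $a$ yields the clean intermediate inequality $\mathcal{C}_D(G)\le\gamma(G)\bigl(n-\gamma(G)\bigr)\le\lfloor n/2\rfloor\lceil n/2\rceil$, which is sharper than the stated $\left(\lceil n/2\rceil\right)^2$ and is attained by the corona. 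What your approach buys is a complete, verifiable proof that in particular controls the vertices of $V\setminus D$ that are neither in $D$ nor private neighbors --- exactly the contribution the paper's extremal-structure argument leaves unaccounted for; what the paper's approach buys is only brevity and a picture of the extremal graph, which your sharpness example recovers anyway.
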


\begin {proof}
The lower bound is trivial, by the fact that $\gamma(G)\leq \left \lfloor \frac{n}{2} \right \rfloor$ and Lemma \ref{lower_bound}.
Since $D$ is minimum dominating set, then every vertex $v\in D$ has at least one private neighbor, otherwise if $v$ has not a private neighbor,  we can remove $v$ from $D$ and find a smaller dominating set which is a contradiction. If $v\in D$ has more than one private neighbor, e.g. $u,w\in PN[v;D]$, the vertex $w$ does not change the value of $n-\mathcal{C}_D(G)$. So, we assume every vertex in $D$ has exactly one private neighbor.
With this assumption, the maximum domination cover can be obtained when the vertices in $D$ construct a complete graph and every $v\in D$ also have a private neighbor. Therefor
\[\mathcal{C}_D(G)=\binom{\gamma(G)}{2}+\gamma(G).\]
Since $\gamma(G)\leq \left \lfloor \frac{n}{2} \right \rfloor$, so the upper bound is established.

Next, we prove that these bounds are sharp. Let $H$ be a graph with $n=2p$ vertices which is constructed from base $K_p$ by adding an extra vertex adjacent to each vertex in $K_p$. This graph is called a corona graph.
It is clear that $\gamma(H)=p$, so if the dominating set contains one-degree vertices, then $\mathcal{C}_D(H)=\left \lceil \frac{n}{2} \right \rceil=p$ and if the dominating set contains vertices of $K_p$, then $\mathcal{C}_D(H)=\left \lceil \frac{n}{2} \right \rceil ^2=p^2$.
\end{proof}

A graph $G$ is called $P_4$-free, if $G$ does not contain an induced subgraph $P_4$.  In the next theorem, we establish bounds for domination cover number of  $P_4$-free graphs. These graphs are also known as cographs.

\begin{theorem}\label{lower}
Let $G$ be a $P_4-$free graph with n vertices and  $D\in \mathcal{D}om_{\gamma}(G)$. Then, we have
\[n-\gamma(G)\leq \mathcal{C}_D(G)\leq 2n-\gamma(G).\]
\end{theorem}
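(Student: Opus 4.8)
The lower bound needs no special structure at all: since every $P_4$-free graph is in particular a graph, Lemma \ref{lower_bound} applied directly gives $\mathcal{C}_D(G)\geq n-\gamma(G)$. So the entire content of the statement is the upper bound $\mathcal{C}_D(G)\leq 2n-\gamma(G)$, and that is where the cograph hypothesis must be used.

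The plan for the upper bound is to reduce to connected components and then exploit the fact that connected cographs have very small domination number. First I would decompose $G$ into its connected components $G_1,\dots,G_c$, setting $n_i=|V(G_i)|$ and $D_i=D\cap V(G_i)$. Because no edges run between distinct components, each $D_i$ must dominate $G_i$, so $|D_i|\geq\gamma(G_i)$; and since $\gamma(G)=\sum_i\gamma(G_i)$ while $\sum_i|D_i|=|D|=\gamma(G)$, equality $|D_i|=\gamma(G_i)$ is forced for every $i$. Moreover $\deg_G(u)=\deg_{G_i}(u)$ for $u\in V(G_i)$, so $\mathcal{C}_D(G)=\sum_i\mathcal{C}_{D_i}(G_i)$, and it suffices to prove the per-component bound $\mathcal{C}_{D_i}(G_i)\leq 2n_i-\gamma(G_i)$, which then sums to the claim.

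The key step, and the one I expect to be the main obstacle, is a domination bound for connected cographs: every connected $P_4$-free graph $H$ satisfies $\gamma(H)\leq 2$. I would prove this from the standard cograph structure theorem, namely that a connected cograph on at least two vertices is the join of two smaller cographs, i.e.\ $V(H)$ partitions into nonempty parts $A$ and $B$ with every vertex of $A$ adjacent to every vertex of $B$ (a connected cograph cannot be a disjoint union, so the join alternative applies). Choosing one vertex $a\in A$ and one vertex $b\in B$, the vertex $a$ dominates all of $B$ and $b$ dominates all of $A$, so $\{a,b\}$ dominates $H$ and $\gamma(H)\leq 2$; the single-vertex case gives $\gamma=1$. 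Since connected components of a cograph are again cographs (being $P_4$-free is hereditary), this applies to each $G_i$.

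With $\gamma(G_i)\leq 2$ established, the per-component bound is a one-line estimate, using only that every vertex of $G_i$ has degree at most $n_i-1$:
\[
\mathcal{C}_{D_i}(G_i)=\sum_{u\in D_i}\deg_{G_i}(u)\leq |D_i|(n_i-1)=\gamma(G_i)(n_i-1)\leq 2n_i-\gamma(G_i),
\]
where the last inequality is exactly equivalent to $\gamma(G_i)\leq 2$. Summing over $i$ and using $\sum_i n_i=n$ and $\sum_i\gamma(G_i)=\gamma(G)$ yields $\mathcal{C}_D(G)\leq 2n-\gamma(G)$. I expect the only genuinely nontrivial ingredient to be the structural fact $\gamma\leq 2$ for connected cographs; the component decomposition and the degree bound $\deg\leq n_i-1$ are routine bookkeeping.
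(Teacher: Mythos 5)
Your proof is correct, but it takes a genuinely different route from the paper. The paper keeps the lower bound via Lemma \ref{lower_bound} exactly as you do, but for the upper bound it invokes the cited result that $\gamma(G)=\gamma_t[1,2](G)$ for $P_4$-free graphs, so that every vertex has at most two neighbours in the dominating set and each $v\in D$ has at most one; writing $\mathcal{C}_D(G)=\sum_{v\in V}\vert N(v)\cap D\vert$ then gives at most $2(n-\gamma)+\gamma=2n-\gamma$. You instead bypass the $[1,2]$-domination machinery entirely: you split into components, observe that a connected cograph is a join and hence has $\gamma\leq 2$, and combine this with the trivial degree bound $\deg_{G_i}(u)\leq n_i-1$ to get $\mathcal{C}_{D_i}(G_i)\leq\gamma(G_i)(n_i-1)\leq 2n_i-\gamma(G_i)$, which sums correctly since $\vert D_i\vert=\gamma(G_i)$ is forced. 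Your argument is more self-contained (no external $[1,2]$-domination results) and, notably, it applies verbatim to an \emph{arbitrary} $\gamma$-set $D$, whereas the paper's argument as written really only controls a $\gamma$-set that happens to be a total $[1,2]$-dominating set, even though the theorem quantifies over all $D\in\mathcal{D}om_{\gamma}(G)$; in that sense your proof actually closes a small gap in the paper's. What the paper's approach buys in exchange is a bound whose mechanism (each vertex covered at most twice) is local and does not rely on the very strong conclusion $\gamma\leq 2$ per component, so it is the kind of argument that could generalize to classes where $[1,2]$-domination coincides with domination but the domination number is not bounded by a constant.
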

\begin{proof}
The lower bound is resulted by Lemma \ref{lower_bound}. In \cite{chellali20131,lv2015total}, it is shown that in $P_4-$free graphs we have $\gamma(G)=\gamma_t[1,2]$, i.e. domination number is equal to total [1,2] domination number of graph. So every vertex is dominated at most twice. 
But according the definition of total $[1,2]$- domination every $v\in D$ can be connected to at most one other vertex in $D$, So we reduce the size dominating set from $2n$.

\end{proof}

\section{Domination Cover Number in Graph Products}\label{coverproduct}
In this section, we investigate domination cover number for the lexicographic product of graphs $G$ and $H$.

\begin{theorem}\label{product-lex}
	The minimum and maximum domination cover number of lexicographic product of $G$ and $H$ are as follows:

		\begin{equation*}
		\mathcal{C}^{min}(G\circ H) =\left\{\begin{array}{ll}
		\mathcal{C}^{min}(G)\times|V(H)|+\gamma(G)(|V(H)|-1),      &   \gamma(H)=1,\\
		\mathcal{C}^{min}_{D_t}(G)\times|V(H)|+\gamma_t(G)\delta(H),  & \gamma(H)>2 \text { or }   (\gamma_t(G)\leq 2\gamma(G)\; \text{and}\; \gamma(H)=2)\\
		\min \left( \alpha^{min}, \beta^{min} \right), &  \text{otherwise,}
		\end{array}\right.
		\end{equation*}
		and 
		
		\begin{equation*}
		\mathcal{C}^{max}(G\circ H) =\left\{\begin{array}{ll}
		\mathcal{C}^{max}(G)\times|V(H)|+\gamma(G)(|V(H)|-1),      &   \gamma(H)=1,\\
		\mathcal{C}^{max}_{D_t}(G)\times|V(H)|+\gamma_t(G)\Delta(H),  & \gamma(H)>2 \text { or }   (\gamma_t(G)\leq 2\gamma(G)\; \text{and}\; \gamma(H)=2)\\
		\max \left (\alpha^{max}, \beta^{max} \right), &  \text{otherwise,}
		\end{array}\right.
		\end{equation*}
		where $\alpha^{min}=\mathcal{C}^{min}_{D_t}(G)\times|V(H)|+\gamma_t(G)\delta(H)$,   $\beta^{min}=2\mathcal{C}^{min}(G)\times|V(H)|+\mathcal{C}^{min}(H)$, $\alpha^{max}=\mathcal{C}^{max}_{D_t}(G)\times|V(H)|+\gamma_t(G) \Delta(H)$ and $\beta^{max}=2\mathcal{C}^{max}(G)\times|V(H)|+\mathcal{C}^{max}(H)$.

\end{theorem}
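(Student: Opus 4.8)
The plan is to reduce the whole statement to one degree identity in the lexicographic product and then read off each line by controlling the structure of the $\gamma$-sets of $G\circ H$. First I would record that for any vertex $(a,x)$ one has $\deg_{G\circ H}(a,x)=|V(H)|\,\deg_G(a)+\deg_H(x)$, because $(a,x)$ is joined to \emph{every} vertex of each neighbouring layer $H^b$ with $b\in N_G(a)$ (contributing $|V(H)|\deg_G(a)$) and to the $\deg_H(x)$ vertices of its own layer $H^a$ that are $H$-neighbours of $x$. Summing over a set $D$ then gives $\mathcal{C}_D(G\circ H)=|V(H)|\sum_{(a,x)\in D}\deg_G(a)+\sum_{(a,x)\in D}\deg_H(x)$, a clean separation into a ``$G$-part'' and an ``$H$-part''; the entire theorem becomes the task of minimising or maximising these two parts over the $\gamma$-sets of $G\circ H$.

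The core step is therefore to describe those $\gamma$-sets. The key observation is that a single vertex placed in a layer $H^b$ dominates, for free, every vertex of each adjacent layer $H^a$ with $a\in N_G(b)$, whereas a layer receiving no such external vertex must be dominated from within, costing at least $\gamma(H)$ vertices of that layer. I would split on $\gamma(H)$ exactly as in the statement. When $\gamma(H)=1$, one vertex per active layer suffices and minimality forces the projection onto $G$ to be a $\gamma$-set with a single vertex per layer; that vertex must be a dominating vertex of $H$, of degree $|V(H)|-1$, to cover its own layer. The $H$-part is then pinned to $\gamma(G)(|V(H)|-1)$ while the $G$-part is $|V(H)|\,\mathcal{C}_{D_G}(G)$ and is optimised by letting $D_G$ range over the $\gamma$-sets of $G$, yielding $\mathcal{C}^{min}(G)$ or $\mathcal{C}^{max}(G)$; this is the first line of both formulas.

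When $\gamma(H)\geq 2$ no single vertex dominates its own layer, so \emph{every} layer must be covered externally, forcing the set $S$ of active layers to be a total dominating set of $G$. This gives the external configuration of size $\gamma_t(G)$ with one freely chosen vertex per active layer: the $H$-part $\sum_{a\in S}\deg_H(x_a)$ is optimised vertex-by-vertex to $\gamma_t(G)\,\delta(H)$ or $\gamma_t(G)\,\Delta(H)$, while the $G$-part runs over the $\gamma_t$-sets to give $\mathcal{C}^{min}_{D_t}(G)$ or $\mathcal{C}^{max}_{D_t}(G)$; these are $\alpha^{min},\alpha^{max}$. If moreover $\gamma(H)=2$, a competing internal configuration appears: putting a dominating pair of $H$ in each layer of a $\gamma$-set $D_G$ of $G$ dominates every layer (internally, or externally for the others) at size $2\gamma(G)$, doubling the $G$-part and contributing the optimal dominating-pair weight $\mathcal{C}^{min}(H)$ or $\mathcal{C}^{max}(H)$ from $H$; these are $\beta^{min},\beta^{max}$. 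The two configurations have equal cardinality exactly when $\gamma_t(G)=2\gamma(G)$: if $\gamma_t(G)<2\gamma(G)$ only the external one is a $\gamma$-set (second line), and in the remaining boundary regime both are $\gamma$-sets, so the extremal cover number is their minimum (resp.\ maximum), giving the third line.

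The step I expect to be the main obstacle is the structural claim that these two pure configurations exhaust the $\gamma$-sets when $\gamma(H)\geq 2$, since a priori a minimum dominating set could mix external single-vertex layers with internal multi-vertex layers. I would rule this out by encoding a candidate set through the pair $(S,T)$, where $S$ is the set of active layers and $T\subseteq S$ those dominated from within: the domination requirement is precisely $V(G)\subseteq T\cup N_G(S)$ and the cardinality is $(\gamma(H)-1)|T|+|S|$. Minimising this integer program shows an optimum may be taken with $T=\emptyset$ (the total-dominating configuration) or, only when $\gamma(H)=2$, with $T=S$ a $\gamma$-set of $G$ (the internal configuration), so no genuinely mixed $\gamma$-set beats $\min(\alpha,\beta)$. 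Checking that, within each configuration, the free $H$-choices and the choice of underlying $G$-set can be optimised independently—so that the separated extrema combine additively through the degree decomposition—is then routine.
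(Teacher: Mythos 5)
Your route is essentially the paper's: the degree identity $\deg_{G\circ H}(a,x)=|V(H)|\deg_G(a)+\deg_H(x)$ (used implicitly in the paper's Lemmas \ref{degreeseq} and \ref{Lemma3.6}), the classification of $\gamma$-sets of $G\circ H$ into an external configuration built on a $\gamma_t$-set of $G$, an internal configuration built on $\gamma$-sets of $G$ and $H$, and the same trichotomy on $\gamma(H)$ and on $\gamma_t(G)$ versus $2\gamma(G)$. The gap is in the classification, and it is inherited from the paper rather than introduced by you. In the case $\gamma(H)=1$ you assert that each chosen vertex ``must be a dominating vertex of $H$, of degree $|V(H)|-1$, to cover its own layer,'' so that the $H$-part is pinned to $\gamma(G)(|V(H)|-1)$. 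That is forced only for a layer $H^a$ with $a\in S\setminus N_G(S)$; if the projection $S$ contains two adjacent vertices of $G$, each of those layers is covered externally by the other and its $H$-coordinate is free, in particular it may have degree $\delta(H)<|V(H)|-1$. This actually falsifies the first line of the formula: for $G=C_4$ and $H=P_3$ one has $\gamma(G\circ H)=2$, and $D=\{(v_1,u_1),(v_2,u_1)\}$ with $v_1\sim v_2$ and $u_1$ an endpoint of $P_3$ is a $\gamma$-set with $\mathcal{C}_D(G\circ H)=2(2\cdot 3+1)=14$, whereas the claimed value is $\mathcal{C}^{min}(C_4)\cdot 3+\gamma(C_4)\cdot 2=16$. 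So no rearrangement of your argument can close this step without restricting to independent $\gamma$-sets of $G$ or correcting the $H$-part.

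The second gap is the one you flagged as the main obstacle but did not actually close. Your integer program in $(S,T)$ controls only the cardinality $(\gamma(H)-1)|T|+|S|$: it shows the minimum cardinality $\gamma_t(G)$ is \emph{attained} by the pure external configuration (and, when $\gamma_t(G)=2\gamma(G)$ and $\gamma(H)=2$, also by the pure internal one), but not that every set attaining it is pure. Mixed $\gamma$-sets exist: for $G=P_6$ and $\gamma(H)=2$, a dominating pair of $H$ in the layer of $v_2$ together with one vertex in each of the layers of $v_5$ and $v_6$ dominates $G\circ H$ with cardinality $4=\gamma_t(P_6)=\gamma(G\circ H)$, and it is neither of your two configurations. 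Since $\mathcal{C}^{min}$ and $\mathcal{C}^{max}$ are extrema over \emph{all} $\gamma$-sets, the cover numbers of such mixed sets must be bounded between $\min(\alpha,\beta)$ and $\max(\alpha,\beta)$, or shown to be dominated by the pure ones; neither your proposal nor the paper (whose Lemma \ref{thm2} and Case 3 simply enumerate ``two types of dominating sets'') supplies that argument.
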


To prove Theorem \ref{product-lex}, we need the following lemmas.

\begin{lemma}\label{thm1}
Let $S$ be a  dominating set for $G$ with minimum cardinality and $h\in V(H)$ dominates all the vertices in $H$. Then, the set $D=S\times \{h\}$ is a dominating set for $G\circ H$.
\end{lemma}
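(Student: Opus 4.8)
The plan is to verify directly from the definition of the lexicographic product that every vertex of $G \circ H$ is dominated by some vertex in $D = S \times \{h\}$. I would take an arbitrary vertex $(a,x) \in V(G) \times V(H)$ and split into the two cases that arise naturally from the adjacency rule of $G \circ H$: the case $a \in S$ and the case $a \notin S$.

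First I would handle the case $a \notin S$. Since $S$ dominates $G$, there is some $b \in S$ with $\{a,b\} \in E(G)$. By the definition of the lexicographic product, the first-coordinate adjacency $\{a,b\} \in E(G)$ makes $(a,x)$ adjacent to $(b,y)$ for \emph{every} choice of second coordinate, in particular to $(b,h) \in D$. Hence $(a,x)$ is dominated. This is the straightforward case and uses only that $S$ is a dominating set of $G$.

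Next I would treat the case $a \in S$. Here $(a,h) \in D$, so if $x = h$ the vertex lies in $D$ itself and is trivially dominated. If $x \neq h$, then since $h$ dominates all of $H$ we have $\{x,h\} \in E(H)$, and the product's second adjacency rule (same first coordinate, adjacent second coordinates) gives that $(a,x)$ is adjacent to $(a,h) \in D$. Thus $(a,x)$ is dominated in this case as well. Combining the two cases shows every vertex of $G \circ H$ is in $D$ or adjacent to a vertex of $D$, so $D$ is a dominating set.

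I do not anticipate a serious obstacle here, since the statement is essentially a direct unpacking of the adjacency definition; the only point requiring mild care is the subcase $x = h$ within $a \in S$, where one must observe that $(a,h)$ is a member of $D$ rather than an external dominator, so that domination holds by membership. One should also note that the lemma only claims $D$ is \emph{a} dominating set, not a minimum one, so no cardinality or minimality argument is needed, which keeps the proof purely structural.
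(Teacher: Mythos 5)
Your proof is correct and takes essentially the same approach as the paper: a direct case analysis on whether the first coordinate lies in $S$, using first-coordinate adjacency when it does not and the fact that $h$ dominates $H$ when it does. The only difference is that the paper's proof additionally argues that $D$ has minimum cardinality in $G\circ H$, which goes beyond what the lemma as stated asserts; you correctly observe that no such minimality argument is needed.
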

\begin{proof}
It is sufficient to show the following statements:
\begin{enumerate}
\item The set $S\times \{h\}$ dominates all the vertices in $G\circ H$.
\item There exist no dominating set for $G\circ H$ of cardinality less than $D$.
\end{enumerate} 
Since every vertex in $V(G)\setminus S$ is dominated by a vertex in $S$, then it is easy to see that every vertex in $(w,x)\in N(S\times \{h\})$ is dominated by a vertex of $(w',h)$ such that $\{w,w'\}\in E(G)$. In other words every vertex $(y,z)\in V(G\circ H)$ such that $y \in S$ is dominated by $(y,h)$.

Next, we show that $D$ has minimum cardinality. Suppose $D'$ be a $\gamma$-set for $G\circ H$ and let $S'=\{x:(x,y)\in D'\}$. It is easy to see that $S'$ is a dominating set for $G$ and $\vert S \vert \leq \vert S' \vert$. Therefore, $D$ has the minimum cardinality.
\end{proof}

\begin{lemma}\label{thm2}
Let $S$ be a $\gamma_t$-set for $G$ and $\gamma(H)\geq 2$. Then, for each vertex $u\in V(H)$ the set $D=S\times\{h\}$ is a $\gamma$-set for $G\circ H$.
\end{lemma}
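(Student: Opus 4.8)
The plan is to establish the two inequalities $\gamma(G\circ H)\le |S|=\gamma_t(G)$ and $\gamma(G\circ H)\ge \gamma_t(G)$ separately, and then conclude that $D=S\times\{h\}$, having cardinality exactly $\gamma_t(G)=\gamma(G\circ H)$, is a $\gamma$-set; the particular choice of the second coordinate $h\in V(H)$ will turn out to be immaterial, which is what the phrase ``for each vertex'' in the statement is meant to capture.

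First I would verify that $D$ is a dominating set, which yields the upper bound. Take an arbitrary vertex $(a,x)\in V(G\circ H)$. If $a\notin S$, then since a total dominating set is in particular a dominating set, $a$ has a neighbour $a'\in S$ in $G$, and then $(a',h)\in D$ is adjacent to $(a,x)$ by the first adjacency rule of the lexicographic product, namely $\{a',a\}\in E(G)$. If $a\in S$ and $x=h$ then $(a,x)\in D$ already. The only remaining case is $a\in S$ with $x\ne h$, and here the \emph{total} domination hypothesis is essential: every vertex of $S$ is itself adjacent to some vertex of $S$, so $a$ has a neighbour $a'\in S$ and again $(a',h)\in D$ dominates $(a,x)$. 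Hence $D$ dominates $G\circ H$, giving $\gamma(G\circ H)\le |S|=\gamma_t(G)$.

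The harder direction, and the place where the hypothesis $\gamma(H)\ge 2$ enters, is the lower bound. Let $D'$ be any $\gamma$-set of $G\circ H$, let $S'=\{a:(a,y)\in D'\text{ for some }y\}$ be its projection onto $V(G)$, and for each $a$ write $D'_a=\{x:(a,x)\in D'\}$, so that $|D'|=\sum_a |D'_a|$ and $S'=\{a:D'_a\ne\emptyset\}$. The key observation is a dichotomy for each $H$-layer $H^a$: either some vertex of the layer is dominated from outside, which forces $a$ to have a neighbour in $S'$, or the whole layer must be dominated from within, which forces $D'_a$ to be a dominating set of $H$ and hence $|D'_a|\ge\gamma(H)\ge 2$. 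Letting $B$ be the set of vertices having no neighbour in $S'$, the first alternative shows $S'$ is a dominating set of $G$, whence $B\subseteq S'$, and the second shows $|D'_a|\ge 2$ for every $a\in B$.

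To finish I would repair $S'$ into a genuine total dominating set and charge the cost to the layers over $B$. For each $a\in B$ pick a neighbour $n(a)$ of $a$ in $G$ and set $T^{*}=S'\cup\{n(a):a\in B\}$; a direct check shows every vertex of $G$ has a neighbour in $T^{*}$ (vertices with a neighbour in $S'$ keep it, and each $a\in B$ now acquires $n(a)$), so $T^{*}$ is a total dominating set and $\gamma_t(G)\le |T^{*}|\le |S'|+|B|$. On the other hand,
\[
|D'|=\sum_{a\in S'}|D'_a|\ge |S'\setminus B|+2|B|=|S'|+|B|\ge \gamma_t(G),
\]
using $B\subseteq S'$. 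Combined with the upper bound this gives $\gamma(G\circ H)=\gamma_t(G)$, so $D$ is indeed a $\gamma$-set. I expect the main obstacle to be this lower bound: pinning down exactly why $\gamma(H)\ge 2$ is what makes the argument work (each self-dominated layer over $B$ contributes a surplus of one, which is precisely what pays for converting the dominating set $S'$ into a total dominating set) and keeping the charging tight. A minor point to address is that $G$ must have no isolated vertices, so that $\gamma_t(G)$ is well defined and the neighbours $n(a)$ exist.
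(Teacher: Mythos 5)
Your proposal is correct and follows essentially the same strategy as the paper: show that $D$ dominates $G\circ H$ using the total-domination property of $S$, then lower-bound any dominating set of $G\circ H$ by projecting to $V(G)$ and observing that each layer not dominated from outside costs at least $\gamma(H)\ge 2$ vertices, which pays for upgrading the projection to a total dominating set. Your write-up is in fact more rigorous than the paper's (the explicit dichotomy, the set $B$, and the charging inequality $|D'|\ge |S'|+|B|\ge\gamma_t(G)$ make precise what the paper only sketches via an exchange argument), but the underlying idea is the same.
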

\begin{proof}
We will show that $D$ is a dominating set for $G\circ H$ with the minimum size $\vert S \vert$ among all dominating sets.
Since $S$ is a total dominating set for $G$, then for every $g'\in V(G)$, there exists a vertex $g\in S$ such that $\{g,g'\}\in E(G)$. Therefore,  each vertex $(g',h')\in G\circ H$ is adjacent by a dominating vertex $(g,h)\in G\circ H$ where $h, h' \in V(H)$.

Assume that $D'$ is a dominating set for $G\circ H$ of cardinality less than $|S|$ and let $S'=\{x:(x,y)\in D'\}$ and $S'$ is not a total dominating set. So, there exists $g\in S'$ such that none of its neighbors are in $S'$. It means that there is a vertex $(g,h)\in D$ such that there is no vertex  $(g',h')\in D$ where $\{g,g'\} \in E(G)$. 
Therefore, the vertex $(g,h)$ can be dominated by the vertex $(g,h'')\in D'$ where $\{h,h''\}\in E(H)$.
 For every vertex in $S'$ which is not total dominated, there exist at least two vertices in $D'$. Now, it is enough to select one of  the vertices in $\{(g,v_i):v_i\in V(H)\}$ and instead of other vertices, we select $(g',h)$ where $\{g,g'\}\in E(G)$.
\end{proof}

\begin{lemma}\label{Total product}
Let $S$ be a $\gamma$-set for $G$, $S'$ be a $\gamma$-set for $H$ with cardinality 2 and $\gamma_t(G)=2\gamma(G)$. Then, $S\times S'$ is a $\gamma-$set for $ G\circ H$.
\end{lemma}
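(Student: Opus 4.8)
The plan is to establish two facts: that $S \times S'$ is a dominating set of $G \circ H$, and that its cardinality $|S|\cdot|S'| = 2\gamma(G)$ equals $\gamma(G\circ H)$, so that it is in fact a $\gamma$-set. The hypothesis $\gamma_t(G) = 2\gamma(G)$ will enter precisely at the last step, to identify a lower bound of $\gamma_t(G)$ with the cardinality $2\gamma(G)$; note that this also forces $G$ to have no isolated vertex, since otherwise $\gamma_t(G)$ would be undefined.

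First I would verify domination directly. Take any $(a,x) \in V(G \circ H)$. Since $S$ dominates $G$, either $a$ has a neighbor $b \in S$, or $a \in S$. In the first case, for any $y \in S'$ the vertex $(b,y) \in S \times S'$ is adjacent to $(a,x)$ because $\{a,b\} \in E(G)$, so $(a,x)$ is dominated from a neighboring layer. In the second case $a \in S$, and since $S'$ dominates $H$, the vertices $\{a\} \times S' \subseteq S \times S'$ dominate the entire layer $H^a$ (either $x \in S'$, or $x$ has a neighbor $y \in S'$ and then $(a,y)$ is adjacent to $(a,x)$). Every $a$ falls into one of these cases, so $S \times S'$ is dominating, giving $\gamma(G\circ H) \leq 2\gamma(G)$.

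The heart of the argument is the matching lower bound $\gamma(G \circ H) \geq \gamma_t(G)$. For an arbitrary dominating set $D$ of $G \circ H$, I would project onto $G$: for each $a \in V(G)$ set $D_a = \{x : (a,x) \in D\}$, and let $A = \{a : D_a \neq \emptyset\}$ be its support. A short case analysis on how a vertex $(a,x)$ can be dominated shows that (i) $A$ is a dominating set of $G$, since a vertex $a \notin A$ has an empty layer-projection and so must have a neighbor in $A$; and (ii) if $a \in A$ has no neighbor in $A$, then the layer $H^a$ can be dominated only from within, forcing $D_a$ to dominate $H$ and hence $|D_a| \geq \gamma(H) = 2$. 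Writing $I$ for the set of such vertices, this yields $|D| = \sum_{a \in A} |D_a| \geq 2|I| + (|A| - |I|) = |A| + |I|$.

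It then remains to show $|A| + |I| \geq \gamma_t(G)$, which is the step I expect to be the main obstacle and where the structure must be used carefully. The idea is to convert $A$ into a total dominating set without enlarging it by more than $|I|$: since $G$ has no isolated vertex, each $v \in I$ has a neighbor $g(v)$, and I set $T = A \cup \{g(v) : v \in I\}$. Checking the three vertex types — vertices outside $A$ (dominated by $A$, so each has a neighbor in $A \subseteq T$), vertices of $A \setminus I$ (each has a neighbor already in $A$), and vertices of $I$ (each $v$ has the neighbor $g(v) \in T$) — shows that every vertex of $G$ has a neighbor in $T$, so $T$ is a total dominating set with $\gamma_t(G) \leq |T| \leq |A| + |I|$. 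Combining, $|D| \geq \gamma_t(G) = 2\gamma(G)$ for every dominating set $D$, so $\gamma(G\circ H) \geq 2\gamma(G)$. Together with the first step and $|S \times S'| = 2\gamma(G)$, this proves that $S \times S'$ is a $\gamma$-set of $G \circ H$.
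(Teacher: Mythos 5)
Your proof is correct, and it is considerably more self-contained than the paper's. The paper disposes of this lemma in three lines: it invokes Lemma \ref{thm2} to assert $\gamma(G\circ H)=\gamma_t(G)$ whenever $\gamma(H)\geq 2$, computes $|S\times S'|=2\gamma(G)=\gamma_t(G)$, and then refers back to the proofs of Lemmas \ref{thm1} and \ref{thm2} for the fact that $S\times S'$ actually dominates. You instead verify domination of $G\circ H$ by $S\times S'$ directly (a clean two-case check on whether $a\in S$), and -- more importantly -- you re-derive the lower bound $\gamma(G\circ H)\geq\gamma_t(G)$ from scratch via the projection $A=\{a: D_a\neq\emptyset\}$, the dichotomy between vertices of $A$ with and without a neighbour in $A$, the count $|D|\geq |A|+|I|$, and the augmentation $T=A\cup\{g(v):v\in I\}$ into a total dominating set. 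This is the step the paper's Lemma \ref{thm2} argues only loosely (its exchange argument about vertices of $S'$ that are ``not total dominated'' is hard to follow as written), so your version supplies a rigorous replacement for the key inequality rather than merely citing it. The trade-off is length: the paper's route is shorter because it amortises the work into the earlier lemmas, whereas your argument stands alone and makes explicit exactly where the hypotheses $\gamma(H)=2$ (forcing $|D_a|\geq 2$ on the isolated part of $A$) and $\gamma_t(G)=2\gamma(G)$ (matching the lower bound to the cardinality of $S\times S'$) are used.
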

\begin{proof}
By Theorem \ref{thm2},  $\gamma(H)\geq 2$  leads to $\gamma (G\circ H)=\gamma_t(G)$. Thus, we have $|S\times S'|=2|S|=2\gamma(G)=\gamma_t(G)=\gamma (G\circ H)$.
 According to the proof of Theorems \ref{thm1} and \ref{thm2}, the set $S\times S'$ is a $\gamma$-set for $G\circ H$.
\end{proof}
\begin{lemma}
Let $S$ be a $\gamma$-set for $G$ such that $\gamma_t(G)=2\gamma(G)$. Then, the set $S$ is an independent dominating set.
\end{lemma}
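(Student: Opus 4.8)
The plan is to argue by contradiction, exploiting the tightness of the classical inequality $\gamma_t(G)\le 2\gamma(G)$. First I would recall why this inequality holds: given any dominating set $S$ of a graph without isolated vertices (and the hypothesis $\gamma_t(G)=2\gamma(G)$ guarantees $\gamma_t(G)$ is finite, hence $G$ has none), one may pick for each $v\in S$ an arbitrary neighbor $f(v)$ and check that $S\cup\{f(v):v\in S\}$ is a total dominating set of size at most $2|S|$; indeed, every vertex outside $S$ already has a neighbor in $S$, and every $v\in S$ gains the neighbor $f(v)$. The equality $\gamma_t(G)=2\gamma(G)$ thus forces this construction to be essentially optimal, which is the leverage I would use.

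Now suppose, for contradiction, that the given $\gamma$-set $S$ is not independent, so that the induced subgraph $G[S]$ contains at least one edge $uv$. The idea is to build a total dominating set strictly cheaper than $2\gamma(G)$. Let $I\subseteq S$ be the set of vertices that are isolated in $G[S]$, i.e. those having no neighbor inside $S$, and for each $w\in I$ fix a neighbor $f(w)\in V(G)$. I would then set $T=S\cup\{f(w):w\in I\}$ and verify that $T$ is a total dominating set by checking three vertex classes: a vertex outside $S$ has a neighbor in $S\subseteq T$; a vertex of $S\setminus I$ has, by definition of $I$, a neighbor inside $S\subseteq T$; and a vertex $w\in I$ has its chosen neighbor $f(w)\in T$. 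Hence every vertex of $G$ has a neighbor in $T$.

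The key counting step is that the edge $uv$ keeps both $u$ and $v$ out of $I$, so $|I|\le |S|-2$, giving $|T|\le |S|+|I|\le 2|S|-2=2\gamma(G)-2$. Therefore $\gamma_t(G)\le |T|<2\gamma(G)$, contradicting the hypothesis $\gamma_t(G)=2\gamma(G)$. This contradiction shows that $S$ must be independent, and since $S$ is by assumption a dominating set, it is an independent dominating set, as required.

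I do not anticipate a serious obstacle here; the only points requiring care are confirming that $T$ is genuinely a total dominating set (handling the three vertex classes separately) and making sure that the presence of a single edge in $G[S]$ removes \emph{two} distinct vertices from $I$ rather than one, which is exactly what yields the strict inequality. One should also note the harmless fact that the assignment $w\mapsto f(w)$ may coincide for different $w$, which only decreases $|T|$ further and never threatens the bound.
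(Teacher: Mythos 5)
Your proof is correct and follows essentially the same route as the paper: both argue by contradiction, using an edge inside $S$ to build a total dominating set of size at most $2\gamma(G)-2$ (the paper pairs the two adjacent vertices with each other and adds one neighbor for every remaining vertex of $S$; you add neighbors only for the vertices isolated in $G[S]$, which is a slightly sharper but equivalent count). No gaps; your explicit verification that $T$ is a total dominating set and the observation that the edge removes two vertices from $I$ are exactly the points the paper's terser proof relies on.
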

\begin{proof}
We prove this lemma by contradiction. Let there exist at least two vertices $g,g' \in S$ such that $\{g,g'\}\in E(G)$. For constructing total dominating set $S'$, it is enough to put vertices $g$ and $g'$ in the set $S'$ and for the rest of the vertices in $S$ like $v$, we put $v$ and one of its neighbor in $S'$. As a result, we obtain a total dominating set for $G$ such that it has cardinality of at most $2(\gamma(G)-2))+2=2\gamma(G)-2$ which is a contradiction.
\end{proof}
Let $D\subseteq V(G\circ H)$, for every $(v,u)\in D$ we define $d_G(v)$ and $d_H(u)$ be the degrees of $v\in V(G)$ in $G$ and $u\in V(H)$ in $H$, respectively.

\begin{lemma}\label{degreeseq}
Let $D$ be a $\gamma$-set for $G\circ H$, $\gamma(H)=1$ and $S=\{x:(x,y)\in D\}$. Then the domination cover number  constructed by $D$ is
 \[\mathcal{C}_D(G\circ H)=\mathcal{C}_S(G)+\gamma(G)\times (|V(H)|-1).\]
\end{lemma}
\begin{proof}
Let the set $\bigcup\limits_{(v,u)\in D}\{d_G(v)\times \vert V(H)\vert +\vert V(H) \vert -1\}$ be the degrees for vertices of the dominating set $D$.
So,
\begin{align*}
\mathcal{C}_D(G\circ H)&=\sum_{(v,u)\in D}(d_G(v)\times|V(H)|+|V(H)|-1) \\
& =\sum_{(v,u)\in D}(d_G(v)\times|V(H)|)+\gamma(G)\times(|V(H)|-1) \\
&=\mathcal{C}_S(G)\times|V(H)|+\gamma(G)\times(|V(H)|-1). 
\end{align*}
\end{proof}
\begin{lemma}\label{Lemma3.6}
Let $D$ be a $\gamma$-set for $G\circ H$ and either $\gamma(H)>2$ or $\gamma(H)=2 \text{ and } \gamma_t(G)\leq 2\gamma(G)$. Then,  \[\mathcal{C}_D(G\circ H)=\mathcal{C}_S(G)\times|V(H)|+\sum_{(v,u)\in D}d_H(u).\]
\end{lemma}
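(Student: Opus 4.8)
The plan is to reduce the identity to a single structural fact---that under the stated hypotheses every $H$-layer of $G\circ H$ carries \emph{at most one} vertex of the $\gamma$-set $D$---after which the formula drops out of a direct degree computation. First I would record the degree of a vertex of the lexicographic product. From the adjacency rule, $(v,u)$ is adjacent to every vertex $(b,y)$ with $b\in N_G(v)$ (there are $d_G(v)\cdot|V(H)|$ of these) and to every $(v,y)$ with $y\in N_H(u)$ (there are $d_H(u)$ of these), and the two families are disjoint. Hence $\deg_{G\circ H}\big((v,u)\big)=d_G(v)\cdot|V(H)|+d_H(u)$, and summing over $D$ gives
\[
\mathcal{C}_D(G\circ H)=|V(H)|\sum_{(v,u)\in D}d_G(v)+\sum_{(v,u)\in D}d_H(u).
\]
Thus the statement is equivalent to $\sum_{(v,u)\in D}d_G(v)=\mathcal{C}_S(G)=\sum_{x\in S}d_G(x)$, which holds exactly when the projection $(v,u)\mapsto v$ is injective on $D$, i.e. when no layer carries two vertices of $D$.

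Next I would prove that no layer is doubly occupied. Writing $S$ for the projection of $D$, a routine check shows $S$ dominates $G$ (any $(v,y)$ is dominated either from within $H^v$, forcing $v\in S$, or from an adjacent layer, forcing a neighbour of $v$ in $S$). Split $S=A\cup B$, where $B$ is the set of vertices of $S$ with no neighbour inside $S$ and $A=S\setminus B$. For $v\in A$ the layer $H^v$ is dominated from an adjacent layer, so it needs only the single vertex forced by $v\in S$; for $v\in B$ no adjacent layer meets $D$, so $\{y:(v,y)\in D\}$ must itself dominate $H$, whence $|D\cap H^v|\ge\gamma(H)$. This gives $|D|\ge|A|+\gamma(H)\,|B|$. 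Conversely, adjoining one external neighbour to each vertex of $B$ turns $S$ into a total dominating set of $G$, so $\gamma_t(G)\le|A|+2|B|$; and by Lemma \ref{thm2} (applicable since $\gamma(H)\ge 2$) we have $|D|=\gamma(G\circ H)=\gamma_t(G)$.

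Combining these, $|A|+\gamma(H)\,|B|\le\gamma_t(G)\le|A|+2|B|$, hence $\gamma(H)\,|B|\le 2|B|$. When $\gamma(H)>2$ this forces $|B|=0$, so $S=A$ is a total dominating set with $|S|=\gamma_t(G)=|D|$, every layer carries exactly one vertex, and the identity follows. When $\gamma(H)=2$ the count alone is consistent with $|B|>0$, and this is the delicate branch: here I would use that $G[A]$ has no isolated vertex (each $a\in A$ has its $S$-neighbour inside $A$, since $B$-vertices are $S$-isolated) together with the hypothesis relating $\gamma_t(G)$ and $2\gamma(G)$ to exclude a doubly occupied layer---the point being that a layer carrying $\gamma(H)=2$ vertices is exactly the configuration $S\times S'$ of Lemma \ref{Total product}, which attains the minimum only when $\gamma_t(G)=2\gamma(G)$.

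The main obstacle is precisely this $\gamma(H)=2$ borderline: the naive counting does not separate the one-per-layer $\gamma$-sets from the two-per-layer $\gamma$-sets of Lemma \ref{Total product}, and the quantitative comparison between $\gamma_t(G)$ and $2\gamma(G)$ must be invoked to guarantee that under the present hypothesis only the former survive (the equality case $\gamma_t(G)=2\gamma(G)$ being exactly the ``otherwise'' branch of Theorem \ref{product-lex}). Once $|D\cap H^v|\le 1$ is established for every $v$, the projection is a bijection onto $S$, so $\sum_{(v,u)\in D}d_G(v)=\mathcal{C}_S(G)$, and substitution into the displayed expansion yields the claimed formula.
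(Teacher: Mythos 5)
Your opening computation $\deg_{G\circ H}\big((v,u)\big)=d_G(v)\,|V(H)|+d_H(u)$ is correct, and you are right that the whole lemma reduces to the injectivity of the projection $D\to S$, i.e.\ to showing no $H$-layer carries two vertices of $D$. Your treatment of the branch $\gamma(H)>2$ is sound, and in fact more rigorous than the paper's own proof, which simply asserts that ``$S$ is a $\gamma_t$-set'' by appeal to Lemmas \ref{Total product} and \ref{degreeseq}, neither of which says anything about the projection of an arbitrary $\gamma$-set. Your chain $|A|+\gamma(H)\,|B|\le|D|=\gamma_t(G)\le|A|+2|B|$ genuinely forces $|B|=0$, hence $S$ total dominating and $|S|=|D|$, when $\gamma(H)>2$.

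The gap is the branch $\gamma(H)=2$, which you flag but do not close, and the closing move you sketch does not work: your claim that a doubly occupied layer ``is exactly the configuration $S\times S'$ of Lemma \ref{Total product}'' is false, because mixed $\gamma$-sets exist in which most layers carry one vertex of $D$ while one layer carries a dominating pair of $H$, and these occur even when $\gamma_t(G)<2\gamma(G)$. Concretely, let $G$ consist of the path $x_1x_2x_3x_4$ and a star with centre $b$ and leaves $\ell_1,\ell_2,\ell_3$, joined by the single edge $\ell_1x_1$; then $\gamma(G)=3$ and $\gamma_t(G)=4<6=2\gamma(G)$. Take $H=C_4$ on $h_1h_2h_3h_4$, so $\gamma(H)=2$ and $\gamma(G\circ H)=\gamma_t(G)=4$ by Lemma \ref{thm2}. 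The set $D=\{(x_2,h_1),(x_3,h_1),(b,h_1),(b,h_3)\}$ is a $\gamma$-set of $G\circ H$ (the layers over $x_1,x_4$ are dominated from $x_2,x_3$, those over the $\ell_i$ from $b$, and the layer over $b$ internally since $\{h_1,h_3\}$ dominates $C_4$), yet $\mathcal{C}_D(G\circ H)=10+10+14+14=48$ whereas the claimed formula gives $\mathcal{C}_S(G)\cdot|V(H)|+\sum_{(v,u)\in D}d_H(u)=7\cdot4+8=36$. Here $|A|=2$, $|B|=1$ and your two counting bounds coincide at $|A|+2|B|=4$, so the inequality argument cannot separate this configuration from the one-per-layer ones. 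In other words, the missing step is not merely delicate but unprovable: the statement itself fails for such $D$, so any proof along your lines (or the paper's) would need either a restriction on the $\gamma$-sets considered or a strengthened hypothesis in the $\gamma(H)=2$ case.
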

\begin{proof}
According to Lemmas \ref{Total product} and \ref{degreeseq}, the set $S$ is a $\gamma_t$-set for $G$. The set 
$\bigcup\limits_{(v,u)\in D} \{d_G(v) \times \vert V(H)\vert +d_H(u)\}$ is degree of vertices in $D$. So we have 
\begin{align*}
\mathcal{C}_D(G\circ H) &=\sum_{(v,u)\in D} (d_G(v) \times \vert V(H)\vert +d_H(u))\\ &=|V(H)|\sum_{(v,u)\in D}d_G(v)+\sum_{(v,u)\in D}d_H(u)\\ &=\mathcal{C}_S(G)\times|V(H)|+\sum_{(v,u)\in D}d_H(u).
\end{align*}
\end{proof}
\begin{lemma}
Let $D$ be a $\gamma$-set for $G\circ H$, $\gamma(H)=2$ and $\gamma_t(G)=2\gamma(G)$. Then,
  \[\mathcal{C}_D(G\circ H)=2\mathcal{C}_S(G)\times|V(H)|+\sum_{(v,u)\in D}d_H(u),\]
  where $S=\{x:(x,y)\in D\}$.
\end{lemma}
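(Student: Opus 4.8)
The plan is to reduce the statement to a single counting identity and then pin down the combinatorial shape of $D$. For a vertex $(v,u)\in V(G\circ H)$ the adjacency rule of the lexicographic product gives its degree as $d_G(v)\,|V(H)|+d_H(u)$, since $(v,u)$ is joined to every vertex in the layers over the $d_G(v)$ neighbours of $v$ in $G$ and, inside its own layer $H^{v}$, to the $d_H(u)$ neighbours of $u$ in $H$. Writing $\mathcal{C}_D(G\circ H)=\sum_{(v,u)\in D}\bigl(d_G(v)\,|V(H)|+d_H(u)\bigr)=|V(H)|\sum_{(v,u)\in D}d_G(v)+\sum_{(v,u)\in D}d_H(u)$ and noting that the $H$-part already matches the desired expression verbatim, the whole lemma collapses to proving the identity $\sum_{(v,u)\in D}d_G(v)=2\,\mathcal{C}_S(G)$.

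The heart of the argument is therefore to show that $D$ meets each column over $S$ exactly twice. Concretely, for $v\in S$ let $T_v=\{u:(v,u)\in D\}$; I would establish that $|S|=\gamma(G)$ and $|T_v|=2$ for every $v\in S$, since then $\sum_{(v,u)\in D}d_G(v)=\sum_{v\in S}|T_v|\,d_G(v)=2\sum_{v\in S}d_G(v)=2\,\mathcal{C}_S(G)$ and the formula follows. To get there I would first invoke Lemma \ref{thm2}, which with the hypothesis $\gamma_t(G)=2\gamma(G)$ yields $\gamma(G\circ H)=\gamma_t(G)=2\gamma(G)$, so $|D|=2\gamma(G)$; the projection $S$ is a dominating set of $G$ exactly as in the proof of Lemma \ref{thm1}, hence $|S|\ge\gamma(G)$. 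If one knows in addition that $S$ is an honest $\gamma$-set, then the preceding lemma forces $S$ to be independent, so no layer over a vertex of $S$ is dominated from a neighbouring column and each $T_v$ must itself dominate $H$, giving $|T_v|\ge\gamma(H)=2$. The chain $2\gamma(G)=|D|=\sum_{v\in S}|T_v|\ge 2|S|\ge 2\gamma(G)$ then collapses to equality, so $|S|=\gamma(G)$ and $|T_v|=2$ throughout, and each $T_v$ is a $\gamma$-set of $H$, matching the product description of Lemma \ref{Total product}.

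The step I expect to be the main obstacle is precisely the claim that the projection $S$ is a $\gamma$-set of $G$, i.e. that $|S|=\gamma(G)$ rather than merely $|S|\ge\gamma(G)$. This is delicate because a minimum dominating set of $G\circ H$ need not be a product $S\times S'$: a priori $D$ could be spread over more than $\gamma(G)$ columns, with some columns carrying a single vertex, and such a configuration can still satisfy $|D|=2\gamma(G)$ while violating the clean two-to-one structure, in which case the stated formula would over-count the $G$-contribution. To close this gap I would attempt a minimality/exchange argument: a column $v\in S$ whose underlying vertex already has a neighbour in $S$ has its layer dominated externally, so all but one vertex of its fibre are redundant and may be deleted, which (one must check) neither leaves $(v,u)$ itself undominated nor removes the unique dominator of any other layer, the latter using that $\gamma(H)=2$ prevents a single fibre vertex from dominating its own layer. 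The fragile point is that this exchange only trims fibres down to size one on the non-independent part of $S$; turning it into a full proof that $S$ is independent (and hence of size $\gamma(G)$), rather than merely a minimal dominating set with mixed fibre sizes, is where the real work lies, and it is the place where the two hypotheses $\gamma(H)=2$ and $\gamma_t(G)=2\gamma(G)$ must be used together most carefully.
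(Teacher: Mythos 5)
Your reduction is exactly the paper's: the degree of $(v,u)$ in $G\circ H$ is $d_G(v)\,|V(H)|+d_H(u)$, the sum over $D$ splits into a $G$-part and an $H$-part, and everything hinges on the identity $\sum_{(v,u)\in D}d_G(v)=2\,\mathcal{C}_S(G)$. Where you differ is that you actually try to prove that identity by pinning down the structure of $D$ (each fibre $T_v$ has size exactly $2$ and $|S|=\gamma(G)$), whereas the paper simply writes down the degree sum and asserts the factor $2$. Your conditional argument is correct and goes beyond what the paper offers: if $S$ is a $\gamma$-set of $G$, the preceding independence lemma forces each layer $H^v$ with $v\in S$ to be dominated from inside its own fibre, so $|T_v|\ge\gamma(H)=2$, and the count $2\gamma(G)=|D|=\sum_{v\in S}|T_v|\ge 2|S|\ge 2\gamma(G)$ collapses to equality.

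The obstacle you flag --- proving that the projection $S$ is a $\gamma$-set rather than something larger --- is genuine and cannot be closed, because the stated formula is false for some $\gamma$-sets $D$. Take $G=K_2$ on $\{a,b\}$, so $\gamma(G)=1$ and $\gamma_t(G)=2=2\gamma(G)$, and $H=P_4=h_1h_2h_3h_4$, so $\gamma(H)=2$. Then $\gamma(G\circ H)=2$ and $D=\{(a,h_2),(b,h_3)\}$ is a $\gamma$-set whose projection $S=\{a,b\}$ is a $\gamma_t$-set of $G$ with singleton fibres. Here $\mathcal{C}_D(G\circ H)=6+6=12$, while the right-hand side of the lemma evaluates to $2\cdot 2\cdot 4+(2+2)=20$; the correct formula for this $D$ is the one of Lemma \ref{Lemma3.6}, without the factor $2$. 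The paper's own proof tacitly concedes this by opening with the case ``$S$ is a $\gamma_t$-set, the proof is similar to the proof of Lemma \ref{Lemma3.6}'' --- a case in which the displayed identity does not hold --- and then establishes the stated formula only under the unproved additional hypothesis that $S$ is a $\gamma$-set of $G$. So your exchange-argument programme for forcing $|S|=\gamma(G)$ was doomed from the start; the honest repair is to restate the lemma as a dichotomy (either $D$ projects onto a $\gamma_t$-set with singleton fibres and the factor $2$ disappears, or $D=S\times S'$ with $S'$ a $2$-element $\gamma$-set of $H$ and the stated formula holds), which is in fact the form in which Case 3 of the proof of Theorem \ref{product-lex} uses it.
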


\begin{proof}
If $S$ be a $\gamma_t$-set, the proof is similar to the proof of Lemma \ref{Lemma3.6}. otherwise, we  consider the case where the set $S$ is a $\gamma$-set. The set 
$\bigcup\limits_{(v,u)\in D} \{d_G(v) \times \vert V(H)\vert +d_H(u)\}$ contains the degrees of vertices in $D$. So 
\[\mathcal{C}_D(G\circ H)=|V(H)|\sum_{(v,u)\in D}d_G(v)+\sum_{(v,u)\in D}d_H(u)=2\mathcal{C}_S(G)\times|V(H)|+\sum_{(v,u)\in D}d_H(u).\]
\end{proof}
Now, we have all of requirement to do the proof of Theorem \ref{product-lex}.

\begin{proof}
We have just proved that $\mathcal{C}^{min}(G\circ H)$. The proof for $\mathcal{C}^{max}(G\circ H)$ is similar. According to the definition of domination cover number, we have
\[\mathcal{C}^{min}(G\circ H)=\min\{ \mathcal{C}_{D}(G\circ H)\,|\, D \in \mathcal{D}om_\gamma (G\circ H)\}\]
There are three cases to consider:

{\bf Case 1: $\gamma(H)=1$ }\\
In this case, we have:
\begin{align*}
\mathcal{C}^{min}(G\circ H) & =\min\{ \mathcal{C}_{S}(G)\times|V(H)|+\gamma(G)\times(|V(H)|-1): S\in \mathcal{D}om_\gamma(G)\} \\
& =\min\{\mathcal{C}_{S}(G)|V(H)|:S\in \mathcal{D}om_\gamma(G) \}+\gamma(G)\times(|V(H)|-1) \\
&=\mathcal{C}^{min}(G)\times|V(H)|+\gamma(G)\times(|V(H)|-1).
\end{align*}
{\bf Case 2: Either $\gamma(H)>2\,\,\text{or}\,\, \gamma_t(G)\leq 2\gamma(G),\gamma(H)=2$ }
\begin{align*}
In this case, we have
\mathcal{C}^{min}(G\circ H) & =\min\{ \mathcal{C}_{S}(G)  \times|V(H)|+\sum_{(v,u)\in D} d_H(u): S\in \mathcal{D}om_{\gamma_t}(G)\} \\
& =\min\{\mathcal{C}_{S}(G)|V(H)|:S\in \mathcal{D}om_{\gamma_t}(G) \}+\min\{\sum_{(v,u)\in D} d_H(u)\}\\
&=\mathcal{C}^{min}(G)\times|V(H)|+\gamma_t(G)\delta(H). 
\end{align*}
{\bf Case 3: $\gamma(H)=2\,\,\text{and}\,\, \gamma_t(G)= 2\gamma(G)$ }

In this case, we have two types of dominating sets for $G\circ H :$

\begin{enumerate}
\item $S$ is a $\gamma_t-$set for $G$. Then, $D=S\times\{h\}$ where $h\in V(H)$ is a $\gamma-$set for $G\circ H$ which is similar to case 2.
\item
$S$ is a $\gamma-$ set for $G$ and $S'=\{h_1,h_2\}$ is a $\gamma-$set for $H$. In this case, $S\times S'$ is a $\gamma-$set for $G\circ H$ and domination cover number of this set is
\end{enumerate}
\begin{align*}
\mathcal{C}^{min}(G\circ H) & =\min_{S\in \mathcal{D}om(G), S'\in \mathcal{D}om(H)} \{2\mathcal{C}_{S}(G)  \times|V(H)|+\mathcal{C}_{S'}(H)\}\\
& =2\min_{S\in \mathcal{D}om(G)}  \{\mathcal{C}_{S}(G)|V(H)|\}+\min_{S'\in \mathcal{D}om(H)}\{ \mathcal{C}_{S'}(H)\}\\
&=2\mathcal{C}^{min}(G)\times|V(H)|+\mathcal{C}^{min}(H). 
\end{align*}
 In this case, $\mathcal{C}^{min}(G\circ H)$ is the minimum value of the above types.
\end{proof}

\section{Finding Domination Cover Number in Some Graphs}\label{coveralgorithm}
In this section, we use the dynamic programing approach to find the domination cover number of certain classes of graphs. 
\subsection{Domination Cover Number in Trees}\label{Tree}
\subsubsection{Definitions}\label{def-tree}
We first choose an arbitrary vertex of $T$ and consider it as a root. So, from now, we can think of $T$ as a rooted tree. For each vertex $v$ of $T$, we define the following notation :
\begin{itemize}
\item The set $ch(v)$ consists of all children of $v$.
\item $T_v$ denotes the subtree of $T$ rooted at $v$.

\item $m^+ [v]$ is the size of the smallest dominating set of $T_v$ which contains $v$ . This is well-defined because the set of all vertices of $T_v$ is a set of $T_v$ that contains $v$.
\item $m^- [v]$ is the size of the smallest dominating set of $T_v$ which does \emph{not} contain $v$. If no such set exists, we define $m^-[v]$ to be $\infty$.
\item $Max^+[v]$ ($Min^+[v]$) is the size of the maximum (minimum) domination cover number of $T_v$ which contains the vertex $v$ . This is well-defined because the set of all vertices of $T_v$ is a dominating set for $T_v$ that contains $v$.  We first initialize $Max^+[v]$ to zero.

\item $Max^-[v]$ ($Min^-[v]$) is the size of the maximum (minimum) domination cover number of $T_v$ that does not contain $v$. This is well-defined because the set of all vertices of $T_v$ is a dominating set for $T_v$ that does not contains $v$. The initialization value of this parameter is  zero.
\end{itemize}

We are going to devise a linear time algorithm based on a bottom-up dynamic programming technique to calculate the above notation for all of the vertices. To calculate these values for each vertex, we assume that similar values have  been already computed for all of the descendants of the current vertex. This is a valid assumption since we process the vertices of $T$ in post-order.
For convenience, we use black and white colorings for the vertices to denote whether they are in the domination or not, respectively. 

\subsection{Values for leaves}
A vertex $v$ is called a leaf if it has no children.
If $v$ is a leaf of $T$, then $T_v$ consists only of $v$ and therefore, we set $m^+[v]=1$ and $m^-[v]=\infty$, since there exists no dominating set for $T_v$ that does not contain $v$.

\subsection{Calculating $m^-[v]$ when $v$ is not a leaf} \label{mminusv}
Since we are concerned with $m^- [v]$ when $v$ is not a leaf, we have the assumption that $v$ is white. Therefore, at least one of its children must be black. Moreover, $v$ has no effect on how one colors the subtrees $T_u$ for $u \in ch(v)$. We consider all of the possible valid bi-colorings and choose one of the yielding  numbers of black vertices with at least value of $T_v$ as follows:

At least one of the children of $v$ must be in the dominating set.
 So, we construct the set $A$ which consists of all children of $v$ like $w$, which satisfy  $m^+[w]-m^-[w]\leq 0$ and  $B$ is assumed as the set of all of children $v$ satisfying  $m^+[w]-m^-[w]= 0$. Now, we solve the following equation
\begin{equation}
m^-[v]= \sum_{u\in A}m^+[u]+\sum_{u\in ch(v)\setminus A}m^-[u].
\end{equation}

Note that in the above equation, at least one of the  children of $v$ must be chosen. If A is empty, then we select  a vertex from B which has the maximum value of $Max^+$. Moreover, if $B$ is empty, we select the vertex $w$ in $ch(v)$ such that, $m^-[w]-m^+[w]$ is minimum.

The value of $Max^-[v]$ is calculated as 
 \begin{equation}
 Max^-[v]=\sum_{u\in A}(Max^+[u]+1)+\sum_{u\in ch(v)\setminus(A\cup B)}Max^-[u]+\sum_{u\in B}max\{Max^+[u],Max^-[u]\},
 \end{equation}
 
 and similarly, the value of $Min^-[v]$ is calculated as
  \begin{equation}
  Min^-[v]=\sum_{u\in A}(Min^+[u]+1)+\sum_{u\in ch(v)\setminus(A\cup B)}Min^-[u]+\sum_{u\in B}min\{Min^+[u],Min^-[u]\}.
  \end{equation}

\subsection{Calculating $m^+[v]$ when $v$ is not a leaf} \label{mpluscalc}
In this case, the vertex $v$ is black and all of the children of $v$ have a black neighbor and there is no restriction about the color of its children, i.e. each child of $V$ can be either white or black. 
 The value of $m^+[v]$ is calculated as 
\begin{equation}
m^+[v]= 1+\sum_{u\in ch(v)}\min (m^+[u],m^-[u]),
\end{equation}
and to find the value of $Max^+[v]$, we consider all of the selected children of  $v$ in the above equation as the set $A$, i.e.
 \begin{equation}
 Max^+[v]=\sum_{u\in A}(Max^+[u]+2)+\sum_{u\in ch(v)\setminus A}(Max^-[u]+1).
 \end{equation}
 Similarly, for $Min^+[v]$ we have
  \begin{equation}
  Min^+[v]=\sum_{u\in A}(Min^+[u]+2)+\sum_{u\in ch(v)\setminus A}(Min^-[u]+1).
  \end{equation}
The time complexity of calculating all of these equations is $O(deg(v))$.

\begin{remark}
With little changes in the above algorithm, we can find a dominating set for a subtree of $T$ with maximum or minimum domination cover number.
\end{remark}

\subsection{Block graphs}
Our algorithm works on a tree-like decomposition structure, named refined cut-tree of a block graph \cite{aho1974design}. Let $G$ be a block graph with $h$ blocks $B=\{BK_1, \dots , BK_h\}$ and the set of  cut-vertices  $C=\{v_1,v_2,\dots, v_p\}$. The cut-tree of $G$, denoted by $T^B(V^B,E^B)$ 
is defined as $V_B = \{BK_1, \dots , BK_h\}\cup\{v_1,\dots ,v_p\}$ and $E_B = \{\{BK_i, v_j\} | v_j \in BK_i , 1\leq i\leq h, 1\leq j \leq p\}$. It is shown in \cite{aho1974design} that the cut-tree of a block graph can be constructed in linear time by the depth-first-search method. For any block
$BK_i$ of $G$, define $B_i = \{v \in BK_i | v\;\; \text{is not a cut-vertex}\}$ where $1\leq i\leq h$. 

\begin{figure}[h!]

       \begin{subfigure}[b]{0.5\textwidth}
               \centering
               \resizebox{.9\textwidth}{.6\textwidth}{
               \includegraphics{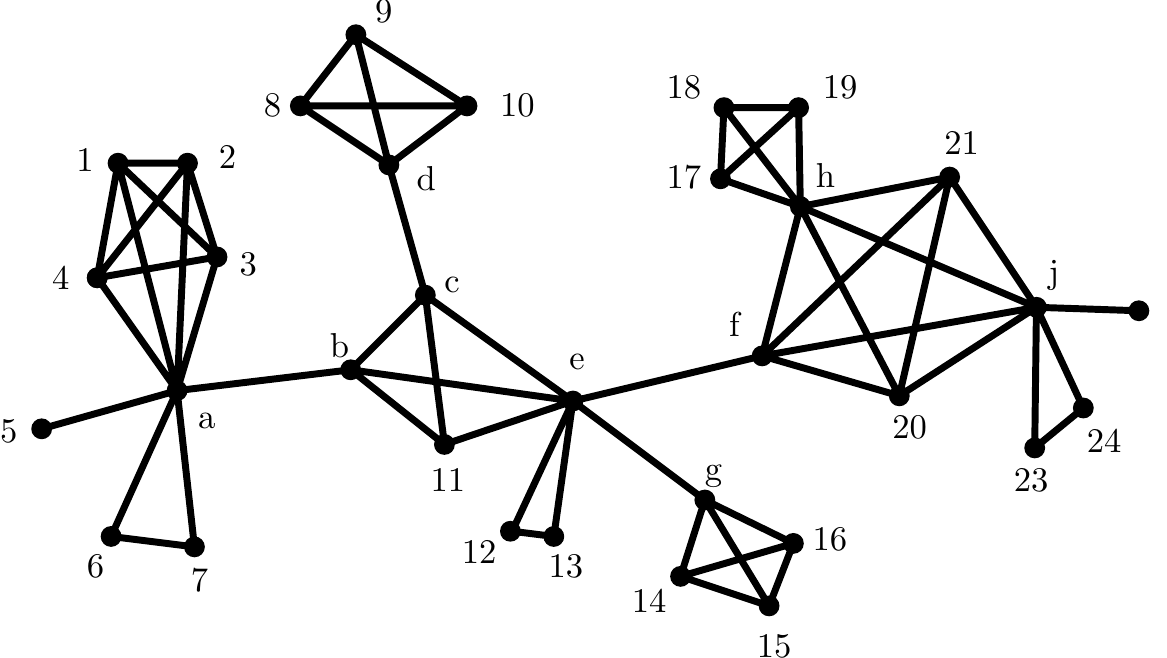}
               }
               \caption{Block Graph}                
       \end{subfigure}  
       \begin{subfigure}[b]{0.4\textwidth}
                \centering
                \resizebox{\textwidth}{.8\textwidth}{
                    \begin{tikzpicture}[thick,
                                every node/.style = {shape=rectangle, draw, align=center,
                                  top color=white}]]
                                \node[shape=circle] {e}
                                [sibling distance=2cm]
                                  child { node {$\emptyset$}                                  	
                                   child { node [shape=circle] {f}
                                   	 child { node {20,21}                                     	
                                    child {[sibling distance=3cm]
                                     node[shape=circle] {h}
                                       child { node {17,18,19} }
                                      }
                                    child {[sibling distance=1cm] node[shape=circle]{j}                                      
                                        child { node {22} }    
                                        child { node {23,24} }
                                     }                                                                   } 
                                   }}                                   
                                  child { node {12,13} }
                                  child { node {$\emptyset$}
                                    child { node[shape=circle] {g}
                                        child { node {14,15,16} } }
                                    }
                                  child { node {11}
                                    child { node[shape=circle] {b}
                                    child { node {$\emptyset$}
                                     child {[sibling distance=1cm] node[shape=circle] {a}
                                     child { node {1,2,3,4}}                                     
                                     child { node {6,7}}
                                     child { node {5}}                                     
                                     }
                                    }                                  }
                                    child { node[shape=circle] {c}
                                    child { node {$\emptyset$}
                                    child { node[shape=circle] {d}
                                    child { node {8,9,10}}
                                    }
                                    }
                                     }
                                   }               ;
                              \end{tikzpicture}}   
                   \caption{Tree-like Decomposition of Block Graph (a)}                
            \end{subfigure}      
             
\caption{Block graph $G$ and its tree-like decomposition}
\end{figure}

\subsection{Domination Cover Number in Block Graphs}
Let $G$ be a block graph and $T$ be its corresponding cut-tree.
We set the following notation:
 \begin{itemize}
 \item $B$ as the set of all block nodes of $T$, 
 \item $C$ as the set of all cut-vertices of $G$,
 \item $m^+_D[v]$ is the size of the smallest dominating set of $T_v$ that  contains node $v$,
 \item $m^-_D[v]$ is the size of the smallest dominating set of $T_v$ that does not contain any vertices of $v$,
 \item $m^-_{\overline{D}}[v]$ is the size of the smallest dominating set of $T_v$ that does not contain any vertices of $v$ but all of vertices in block which is constructed by vertices in $v$ are dominated,
\item  $Max^+[v]$,$Max^-[v]$,$Min^+[v]$ and $Min^-[v]$ are  defined similar to the ones at the beginning of Section \ref{Tree}. 

 \end{itemize}
   
{\bf{Initialization}}

Let $v\in T$ be a leaf. It is clear that $v$ is a block node of $T$  and has some vertices which are not cut-vertices and the degree of all of them is the same. We initialize $m^+[v]=1$ if a non-cut-vertex of $v$ is selected, otherwise $m^+[v]=\infty$,  $m^-_D[v]=\infty,m^-_{\overline{D}}[v]=0$ and also  $Max^+[v]$ equal to the degree of the selected vertex and both  $Max^-_D[v]$ and $Max^-_{\overline{D}}[v]$ to $0$.

{\bf Updating state:}

In the post order traversal of $T$, for each non leaf vertex like $v$, we define the following sets:
\begin{itemize}
\item $S^+=\{u\in ch(v) \;  | \; m^+[u]-m^-[u]< 0\}$,
\item $S^0=\{u\in ch(v)\; |\;  m^+[u]-m^-[u]= 0\}$,
\item $Z=\{u\in S^0 \; |\;  Max^+[u]-Max^-[u]\geq 0\}$.
\end{itemize}
Next, we consider the following cases:

\begin{itemize}
\item[a)] {\bf{$v\in C$:}}
In this case we have two situation to consider:
\begin{itemize}
\item[a.1)] If $v$ is selected, then, all of the children of $v$ are dominated. So, in calculating the domination number we do not consider the leaf children. Thus,
\begin{equation}\label{Block}
m^+[v]=1+ \sum_{u\in ch(v),u \text{ is not leaf}}Min\{m^+[u],m^-_D[u],m^-_{\overline{D}}[u]\}.
\end{equation}

To calculate  the maximum cover number, we first add the degree of $v$ to $Max^+(v)$ and then, for each child $u$, according to  the values of $m^+[u],m^-_D[u],m^-_{\overline{D}}[u]$ as the   minimum value, we add $Max^+(u),Max^-(u),Max^-_{\overline{D}}(u)$ to $Max^+(v)$, respectively. Also, in the case that some of them have the same minimum value, we consider their maximum corresponding  cover number of them.

\item[a.2)]If $v$ is not selected, then there are two cases to consider: 

First, if the set $S^+$ is empty, then all of the children have  been  dominated and we set
\[m^-_D[v]=\sum_{u\in ch(v)}m^-_{\overline{D}}[u],\]
and
\[Max^-_{D}(v)=\sum_{u\in ch(v)} Max^-_{\overline{D}}(u).\]

Second, if $S^+$ is not empty, then we set
\[m^-_{\overline{D}}[v]=\sum_{u\in S^+}m^+[u]+\sum_{u'\in ch(v)\setminus S^+}\min (m^-[u'],m^-_{\overline{D}}[u']),\]
and calculate  $Max^-_{\overline{{D}}}(v)$ by adding $\sum_{u\in S^+} Max^+(u)$ at first and then for each $u'\in ch(v)\setminus S^+$, based on which of $m^-[u']$ or $m^-_{\overline{D}}[u']$ are minimum, we add $Max^-(u')$ or $Max^-_{\overline{D}}[u']$, respectively. If $m^-[u']$ and $ m^-_{\overline{D}}[u']$ are equal, we add the maximum value of $Max^-(u')$ and $Max^-_{\overline{D}}[u']$.

\end{itemize}

\item[b)] If $v$ is a block node and has some non cut-vertices, then we have the following cases to consider:
    \begin{itemize}
    \item[b.1)] If one of the non cut-vertices of $v$, like $x$, is selected, then we set
     \[m^+[v]=1+\sum_{u\in ch(v)}Min (m^+[v],m^-[v]),\]
     and 
     \[Max^+(v)=deg(x)+\sum_{u\in S^+\cup Z} Max^+(u)+\sum_{u\in  ch(v)\setminus (S^+\cup Z)} Max^-(u).\]
    
    \item[b.2)]If $v$ is not selected, i.e. none of the non cut-vertices of $v$ are selected and $v$ has been dominated by at least one of its children, then 
         \begin{itemize}
         \item  In the case that $S^+\cup S^0 \neq \emptyset$, we set
            \[m^-_D[v]=\sum_{u\in S^+}m^+[u]+\sum_{u\in ch(v)\setminus S^-}m^-[u],\]
            and 
            \[Max^-_D(v)=\sum_{u\in S^+\cup Z} Max^+(u)+\sum_{u\in  ch(v)\setminus (S^+\cup Z)} Max^-(u).\]
         \item   If $S^+\cup S^0 = \emptyset$, then without lose of generality,  we can assume that $x$ is the first vertex in the order of $m^+[u']-m^-[u']$ where $u'\in ch(v)$ and if there exist different choices for $x$, we select the vertex with maximum value of $Max^+(x)-Max^-(x)$. Therefore, we set
          \[M^-_D[v]=m^+[x]+\sum_{u\in ch(v)\setminus \{x\}} m^-[u],\]
          and 
          \[Max^-_D(v)= Max^+(x)+\sum_{u\in ch(v)\setminus \{x\}} Max^-(u).\]
       
       \item[b.3)] If $v$ is not selected and is not dominated by at least one of its children, then, we set
          \[m^-_{\overline{D}}[v]=\sum_{u\in ch(v)}m^-[u],\]
          and 
          \[Max^-_{\overline{D}}[v]=\sum_{u\in ch(v)}Max^-[u],\]
               
         \end{itemize}
         
         \item[c)] If $v$ is a block node which does not have any non cut-vertex, then we have $m^+[v]=\infty$ and  both values of $m^-_D[v]$ and $m^-_{\overline{D}}[v]$ can be calculated as in the case b.
    \end{itemize}
  
\end{itemize}

\begin{theorem}
The proposed algorithm has linear time complexity.
\end{theorem}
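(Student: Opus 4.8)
The plan is to show that the algorithm consists of a single post-order sweep of the cut-tree $T^B$ in which the work spent at a node is proportional to that node's number of children, so that summing over the tree telescopes to a bound linear in the size of $G$. Write $n = |V(G)|$ and $m = |E(G)|$. First I would bound the size of the decomposition: a connected graph has at most $n-1$ blocks and at most $n$ cut-vertices, so $V^B = \{BK_1,\dots,BK_h\} \cup \{v_1,\dots,v_p\}$ satisfies $|V^B| = h + p = O(n)$, and since $T^B$ is a tree, $|E^B| = |V^B| - 1 = O(n)$. By \cite{aho1974design} the cut-tree is constructed in linear time, and during the defining depth-first search I would also record $deg_G(v)$ for every vertex, at a total cost of $O(n+m)$.

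Next I would analyze the per-node cost. The algorithm visits each node of $T^B$ exactly once in post-order, and at a node $v$ it only reads quantities already stored at the children in $ch(v)$. Forming the sets $S^+$, $S^0$ and $Z$ takes one pass over $ch(v)$; each recurrence for $m^+[v]$, $m^-_D[v]$, $m^-_{\overline{D}}[v]$ and for the associated $Max$/$Min$ values is a sum indexed by $ch(v)$ or a subset of it; and the fallback selections — choosing the child of maximum $Max^+$, or the first child in the order of $m^+[u']-m^-[u']$ — are single min/max scans over $ch(v)$. Hence node $v$ is handled in $O(|ch(v)| + 1)$ time.

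The delicate point, which I expect to be the main obstacle and would argue most carefully, is that a block node is not more expensive than $O(|ch(v)|+1)$ even though its block $B_i$ may contain many non-cut-vertices whose degree in $G$ is large. Two observations settle this. First, all non-cut-vertices of a block share a common degree (fixed at initialization), so a single representative $x$ suffices and is inspected in $O(1)$; the algorithm never needs to loop over the individual non-cut-vertices of a block. Second, the recorded degrees enter the $Max$/$Min$ recurrences only as additive terms such as $deg(x)$, $Max^+[u]+2$, or $Max^-[u]+1$, each of which is a single arithmetic operation in the standard word-RAM model. Consequently no per-node cost is concealed in the magnitudes of the degrees.

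Finally I would sum the per-node bound over the whole tree. Since $\sum_{v \in V^B} |ch(v)| = |E^B|$, we obtain $\sum_{v \in V^B} O(|ch(v)| + 1) = O(|E^B| + |V^B|) = O(n)$. Adding the $O(n+m)$ spent building the cut-tree and precomputing degrees yields an overall running time of $O(n+m)$, that is, linear in the size of the input graph $G$.
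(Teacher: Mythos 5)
Your proof is correct and follows the same approach as the paper --- a post-order dynamic program over the cut-tree with per-node cost charged to the children --- but the paper's own proof is only a two-sentence sketch, whereas you supply the actual accounting (the $O(n)$ bound on $|V^B|$, the $O(|ch(v)|+1)$ per-node cost, the observation that non-cut-vertices of a block need not be enumerated individually, and the telescoping sum $\sum_v |ch(v)| = |E^B|$). Nothing further is needed.
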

\begin{proof}
We use a dynamic programming approach over the cut-tree of the block. At each step, the algorithm calculates just some variables and it is straightforward.
\end{proof}

\section{Conclusion}\label{sec:conc}
Throughout the paper, we introduce the new concept as named domination cover number in graph. We find bounds for this parameter in general graphs, paths, $p_4$-free graphs and etc.
Also we investigate domination cover number on the lexicography product of two graphs.
In the last section, two linear algorithms for finding maximum or minimum domination cover number for trees and block graphs are proposed.


\end{document}